\newcommand{\nd}{\noindent}
\newtheorem{theorem}{Theorem}
\begin{document}

\newtheorem{theo}{Theorem}[section]
\newtheorem{definition}[theo]{Definition}
\newtheorem{lem}[theo]{Lemma}
\newtheorem{prop}[theo]{Proposition}
\newtheorem{coro}[theo]{Corollary}
\newtheorem{exam}[theo]{Example}
\newtheorem{rema}[theo]{Remark}
\newtheorem{example}[theo]{Example}
\newtheorem{principle}[theo]{Principle}
\newcommand{\ninv}{\mathord{\sim}}
\newtheorem{axiom}[theo]{Axiom}
\title{A Quantum Version of Spectral Decomposition Theorem of Dynamical Systems, Quantum Chaos Hierarchy: Ergodic, Mixing and Exact}

\author{{\sc Ignacio Gomez}$^1$ {\sc and} \ {\sc Mario Castagnino}$^2$}

\maketitle

\begin{small}
\begin{center}
1- Instituto de F\'{\i}sica de Rosario (IFIR-CONICET), Rosario, Argentina\\
2- Instituto de Física de Rosario (IFIR-CONICET) and \\[0pt]
Instituto de Astronomía y Física del Espacio (IAFE-CONICET), \\[0pt]
Casilla de Correos 67, Sucursal 28, 1428 Buenos Aires, Argentina.\\[0pt]
\end{center}
\end{small}

\vspace{1cm}

\begin{abstract}
In this paper we study Spectral Decomposition Theorem \cite{LM}
and translate it to quantum language by means of the Wigner
transform. We obtain a quantum version of Spectral Decomposition
Theorem (QSDT) which enables us to achieve three distinct goals: First, to rank Quantum Ergodic Hierarchy levels \cite{0, NACHOSKY MARIO}. Second, to analyze
the classical limit in quantum ergodic systems and quantum mixing
systems. And third, and maybe most important feature, to find a
relevant and simple connection between the first three levels of
quantum ergodic hierarchy (ergodic, exact and mixing) and quantum spectrum. Finally, we illustrate the physical relevance of QSDT applying it to two examples: Microwave billiards \cite{stockmann, ste95} and a phenomenological Gamow model type \cite{Omnes1, Omnes2} .
\end{abstract}

\begin{small}
\centerline{\em Key words: QSDT-QEH-ergodic-mixing-exact}
\end{small}
\bigskip
\noindent

\bibliography{pom}
\section{Introduction}

Dynamical systems are one of most extensively studied subjects in physics. Mathematically, a dynamical system can be defined as a quadruplet $(X, \Sigma, µ, \tau)$ where $X$ is a set (typically, the phase space in classical mechanics), $\Sigma$ is a sigma-algebra on $X$, $\mu$ is a finite measure on $\Sigma$ and map $\tau:X\rightarrow X$ is a measure-preserving transformation (see section 2.1). Physical interpretation of this abstract definition is that
a dynamical system gives a fixed rule which describes time dependence of a point (state of system) in a geometrical space (the phase space). This rule is deterministic in the sense that for a given time interval only one future state follows from the current state.

From its origins in Newtonian mechanics to its subsequent measure theoretical definition numerous tools have been developed both theoretical and practical in dynamical systems theory. Some of these are discrete maps, bifurcation theory, topological knots, etc. These give different descriptions\footnote{More precisely, they give local and global descriptions of flux of a dynamical system. Since in this paper we will not use concept of flow, we will not expose any discussion about it. For more details see \cite{lich, guck, gutzwiller}.} of dynamical systems. However, in many cases most important is the asymptotic behavior of  dynamical system \footnote{For example, in synchronized dynamical systems like Kuramoto model (see p. 42 of \cite{kuramoto1}, p. 164 of \cite{kuramoto2}).}(mathematically, $t\rightarrow \infty$). This is the case of approach to equilibrium and we can use tools of chaos theory to study the evolution to equilibrium. Related to this, classical chaos presents several approaches which are related to each other: algorithmic complexity \cite{BELOT}, Lyapunov exponents \cite{lich,guck} and Ergodic Hierarchy \cite{LM,3M}. For instance, Brudno theorem \cite{BELOT} relates complexity with Kolmogorov-Sinai entropy
\cite{BELOT} while Pesin theorem \cite{BELOT} relates Ergodic Hierarchy with Lyapunov exponents. The
relationships between these chaos indicators is illustrated in the ``chaos pyramid" of Fig. 1. According to this structure, Ergodic Hierarchy is
one of features of classical chaos.

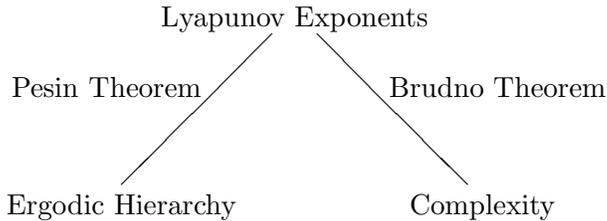
\begin{figure}\label{pyramid}
\begin{center}
\unitlength=1mm
\begin{picture}(5,23)(0,0)
\put(-3,23){\line(-1,-1){20}} \put(3,23){\line(1,-1){20}}
\put(0,25){\makebox(0,0){Lyapunov Exponents}}
\put(-23,0){\makebox(0,0){Ergodic Hierarchy}}
\put(25,0){\makebox(0,0){Complexity}}
\put(27,16){\makebox(0,0){Brudno Theorem}}
\put(-25,16){\makebox(0,0){Pesin Theorem}}
\end{picture}
\caption{``Chaos pyramid" is a diagram for relationships
between Ergodic Hierarchy, Lyapunov exponents and complexity through
Pesin and Brudno theorems.}
\end{center}
\end{figure}

Ergodic Hierarchy ranks the chaotic level of a dynamical system
according to the way in which correlations between two arbitrary
distributions cancel for large times ($t\rightarrow\infty$). For example, in classical mechanics a typical
correlation $C(A,B)$, between two sets A and B of the phase space
$\mathcal{M}$, is $C(A,B)=\mu(A\cap B)-\mu(A)\mu(B)$ where $\mu$ is
a measure defined over subsets of $\mathcal{M}$. If
$\mu(\mathcal{M})=1$ then $\mu(A)$ can be interpreted as
probability of $A$. In such case, $C(A,B)$ can be interpreted as the
difference between probability of $A$ and $B$ simultaneously
and product of probabilities of $A$ and $B$. In
this sense $C(A,B)$ is a correlation that ``measures" how independent are events $A$ and $B$. Moreover, Ergodic Hierarchy ranks
chaotic level of a dynamical system through correlations
$C(A,B)$. For the quantum case, \emph{Quantum Ergodic Hierarchy}
(QEH) \cite{0,NACHOSKY MARIO} also expresses how correlations are
canceled for large times, but in this case quantum correlations
are defined between states and observables as quantum mean values
$(\rho(t)|O)$.

On the other hand, we have an important tool called \emph{Spectral Decomposition Theorem} (SDT) \cite{LM} which describes asymptotic behavior of dynamical systems. Roughly speaking and assuming certain hypotheses about the evolution of system, it says that any density of a dynamical system tends asymptotically ($t\rightarrow \infty$) to a sum of localized densities (more precisely, normalized characteristic functions). As we mentioned above, Ergodic Hierarchy ranks asymptotically chaos between two distributions or densities. So we can expect some kind of connection between Ergodic Hierarchy and Spectral Decomposition Theorem. In fact, a connection is established in Theorem 8 (see section 3) which gives necessary and sufficient conditions for ergodic, mixing and exact levels of Ergodic Hierarchy according to SDT decomposition.

For quantum systems, quantum chaos is a discipline in constant evolution which began in 70's with pioneering experiments of microwave billiards
\cite{stockmann, gutzwiller, haake, casati} and it is well known as the
study of quantum mechanical aspects of quantum systems which have a chaotic classical description. Typical
approaches that emerged from this evolution were scars \cite{stockmann, gutzwiller}, WKB
approximation \cite{stockmann, tabor}, and Random Matrix Theory \cite{stockmann, haake}. A reasonable definition that takes into account the main characteristics of quantum chaos was given by M. Berry \cite{BERRY}: ``\emph{a quantum system is chaotic if its classical limit is chaotic}". This was taken later on to be a possible definition of quantum chaos.

As we mention in introduction of part one \cite{0}, that there are many ways to define quantum chaos by
complexity (see \cite{MANTICA1, MANTICA2}), exponential
divergence trajectories (see \cite{SCHUSTER, BATTERMAN}),
treatment of chaos based on the introduction of non-linear terms in
the Schrodinger equation \cite{WEINBERG} and non-unitary
evolution of a quantum system as an indicator of quantum chaos \cite{GHIRARDI}. However, as mentioned in \cite{0}, the
study of chaos based on QEH takes into
account that classical limit may
not exhibit chaotic behavior which would be a threat to
Correspondence Principle (CP). In paper \cite{FOP} we discuss and give an alternative way to study chaos in quantum systems based on fundamental graininess and the classical statistical limit which is compatible with CP.

As in previous works \cite{0, NACHOSKY MARIO, FOP}, in
this paper we study quantum chaos from
Berry's definition and \emph{Quantum Ergodic Hierarchy} \cite{0}.

The main goal of this paper is a Quantum Version of Spectral Decomposition Theorem (QSDT) which gives a direct connection with
QEH and classical limit. In addition we consider
the degree of generality of QSDT which
make it useful as a framework for quantum chaos. All these aspects are
in accordance with conceptual foundations in context of
Belot-Earman program \cite{BELOT}. Other relevant consequences of QSDT are: A characterization of
first two levels of QEH (see theorem 10 of section 4.2) plus
a simple connection between these levels and
quantum spectrum in the
classical limit (see section 5).
In section 2 we begin introducing a brief review of
minimal notions of density theory for the
development of following sections.

\section{Theory of densities and Markov Operators}

Historically, the concept of density has only recently appeared in
order to unify descriptions of phenomena of statistical
nature. Clear examples of this are Maxwell velocity
distribution and quantum mechanics, considered as attempts to unify
theory of gases and as justification for the derivation of Planck
distribution of black body radiation respectively.

Moreover, development of modern physics demonstrated the usefulness of
densities to give a description of systems with a large number
of freedom degree which have an uncertainty by ignorance. In
this section we introduce a brief review of
theory of densities and Markov operators based on dynamical systems formalism \cite{LM}.

\subsection{Density functions and Dynamical Systems}

We begin recalling mathematical elements of
dynamical systems theory. As we mentioned in introduction, these elements and definitions can be given within the framework of measurement theory\footnote{In this paper this will be our mathematical framework.} \cite{LM, 3M}.
Given a set $X$, $\Sigma$ is a $\sigma$\emph{-algebra} of subsets of $X$ if it
satisfies:

\begin{enumerate}
\item[$(I)$] $X \in \Sigma$

\item[$(II)$] $A,B \in \Sigma \Longrightarrow  A\backslash B \in \Sigma$

\item[$(III)$] $(B_i)\in \Sigma\Longrightarrow \cup_{i}B_i \in \Sigma$ \footnote{Index $i$ must run out a countable set.}
\end{enumerate}

\nd A function $\mu$ on $\Sigma$ is a \emph{probability measure} if
it satisfies:

\begin{enumerate}
\item[$(I)$] $\mu:\Sigma \rightarrow [0,1]$ and $\mu(X)=1$

\item[$(II)$] For all countable family of pairwise disjoint subsets $(B_i)\in \Sigma \Longrightarrow \mu(\cup_{i}B_i )= \sum_{i}\mu(B_i
)$

\end{enumerate}

\nd A \emph{measure space} is any shortlist of the form
$(X,\Sigma,\mu)$. Given a measure space $(X,\Sigma,\mu)$, a
\emph{measure preserving transformation or automorphism} $T$ is a function $T:X\rightarrow X$ which satisfies:

\begin{equation}\label{automorphism}
\forall A\in \Sigma: \mu(T^{-1}A)=\mu(A)
\end{equation}

\nd Then we say that family of transformations
$\tau:=\{T_t:X\rightarrow X\}_{t\in \mathbb{R}}$ satisfying Eq. \eqref{automorphism} and (see definition 7.2.1. of \cite{LM})

\begin{enumerate}
\item[$(a)$] $T_0(x)=x$ for all $x\in X$

\item[$(b)$] $T_{t_1}(T_{t_2}(x))=T_{t_1+t_2}(x)$ for all $x\in X$ and $t_1,t_2\in \mathbb{R}$

\item[$(c)$] The mapping $(t,x)\rightarrow T_t(x)$ from $\mathbb{R}\times X$ into $X$ is continuous.
\end{enumerate}

\noindent is a group\footnote{In the general case $\tau$ is required to be a semigroup, i.e. in semidynamical systems (see definition 7.2.3 of \cite{LM}). In dynamical systems $\tau$ is a group while in semidynamical systems $\tau$ is required to be a semigroup. Therefore, the dynamical systems are invertible while the semidynamical systems may not be invertible.} of measure preserving
automorphisms and we call it a \emph{dynamical law} $\tau$. With
these definitions, we say that quaternary $(X, \Sigma, \mu, \tau)$
is a \emph{dynamical system}.

In dynamical
system theory the central notion is the definition of density: Given a dynamical
system $(X,\Sigma,\mu,\tau)$ and $D(X, \Sigma, \mu )=\{f\in
L^{1}(X, \Sigma, \mu ):f\geq 0\,\ ;\,\ \Vert f\Vert =1\}$ any
function $f \in D(X, \Sigma, \mu, \tau)$ is called a \emph{density}.

In classical mechanics it
is usual to take $X=\mathcal{M}$ as phase space,
$\Sigma=\mathcal{P}(\mathcal{M})$ as power set of phase space,
$\mu$ as Lebesgue measure and $T_t$ as time evolution transformation governed by Hamilton equations. This
context will be clarified in the next sections. In addition to dynamical system definition other fundamental concept is the notion
of Markov operator. This important class of operators is presented
below.

\subsection{Markov Operators}

Given a classical system $S$ with an initial state given by a density
$f_0$ we know that its temporal evolution will be determined by
Liouville equation. Except in simple cases we know that this
equation has no exact solution and thus we are forced to use
another strategy to study the evolution of system. In this
sense Markov operators are very useful because their properties
allow us to know the asymptotic behavior of densities. General behavior of densities can be well developed in both
dynamical and stochastic systems. Markov operators contain global information concerning densities in the asymptotic limit $t\rightarrow \infty$. Under certain hypotheses on Markov operators we have conditions for
existence of an equilibrium density $f_\ast$ which physically corresponds to equilibrium approach. The approach to equilibrium in the limit $t\rightarrow \infty$ by means of Markov operators will be the link between
classical limit and \emph{Quantum Spectral Decomposition
Theorem}. This will be considered in section 4. We present a
brief review of minimal concepts for the development of this paper beginning with the following definition (see
\cite{LM}, pag. 32).

\begin{definition}(\emph{Markov Operator})\label{markov operator}
Given a measure space $(X,\Sigma,\mu)$ a linear operator
$P:L^{1}\rightarrow L^{1} $ is called a \emph{Markov operator} if it
satisfies:

\begin{enumerate}
\item[$(a)$] $Pf \geq 0$

\item[$(b)$] $\|Pf\| = \|f\|$

for all $f\in L^{1} $, $f\geq 0$
\end{enumerate}

\end{definition}

\nd From condition (b) it
follows that $P$ is \emph{monotonic}, that is if $f,g \in L^{1}$
with $f \geq g$ then $Pf \geq Pg$.
Markov operators satisfy important properties that
will be crucial in the derivation of a quantum version of
Spectral Decomposition Theorem (see \cite{LM}, pag. 33).

\begin{theorem}\label{markov properties}
Let $(X,\Sigma,\mu)$ be a $\sigma$-algebra and let $f \in L^{1}$. If
P is a Markov operator then:

\begin{enumerate}

\item[$(I)$] $\|Pf\| \leq \|f\|$ (\emph{contractive} property)

\item[$(II)$] $|Pf(x)| \geq P|f(x)|$
\end{enumerate}
\end{theorem}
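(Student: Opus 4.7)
The plan for both parts rests on the Jordan-type decomposition $f = f^{+} - f^{-}$, where $f^{+} = \max(f,0)$ and $f^{-} = \max(-f,0)$, so that $f^{\pm} \geq 0$, $|f| = f^{+} + f^{-}$, and $\|f\| = \|f^{+}\| + \|f^{-}\|$. Since each $f^{\pm}$ is a nonnegative $L^{1}$ function, conditions (a) and (b) of Definition \ref{markov operator} apply directly to them: $Pf^{\pm} \geq 0$ almost everywhere and $\|Pf^{\pm}\| = \|f^{\pm}\|$. Linearity of $P$ then delivers the two identities
\[
Pf = Pf^{+} - Pf^{-}, \qquad P|f| = Pf^{+} + Pf^{-},
\]
which will be the only inputs needed from the definition.

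For part (I) the remaining argument is routine. From $Pf = Pf^{+} - Pf^{-}$ together with $Pf^{\pm} \geq 0$, the triangle inequality gives $|Pf(x)| \leq Pf^{+}(x) + Pf^{-}(x)$ pointwise. Integrating over $X$ and invoking property (b) separately on $f^{+}$ and $f^{-}$, I obtain
\[
\|Pf\| \leq \|Pf^{+}\| + \|Pf^{-}\| = \|f^{+}\| + \|f^{-}\| = \|f\|,
\]
which is exactly the contractivity asserted in (I).

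For part (II) the plan is to compare $|Pf(x)| = |Pf^{+}(x) - Pf^{-}(x)|$ against $P|f|(x) = Pf^{+}(x) + Pf^{-}(x)$ pointwise, and this comparison is where the real difficulty lies. For nonnegative reals $a = Pf^{+}(x)$ and $b = Pf^{-}(x)$ one has $|a-b| \leq a+b$, with equality precisely when $ab = 0$; the direct Jordan-decomposition route therefore only produces $|Pf| \leq P|f|$, which is the reverse of the bound as stated. To arrive at $|Pf(x)| \geq P|f|(x)$ as written, I would need to establish essentially disjoint supports for the positive and negative parts of $Pf$, i.e.\ $Pf^{+}(x)\,Pf^{-}(x) = 0$ almost everywhere, so that $|Pf|$ and $P|f|$ actually coincide. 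This disjointness cannot be extracted from axioms (a)--(b) alone. The main obstacle in my plan is therefore to supply the missing structural input: either by restricting $P$ to a subclass (such as a Frobenius--Perron operator arising from a measurable transformation, where the support of $Pg$ for $g\geq 0$ can be tracked pointwise), or by reading (II) as a typographical inversion of the classical Markov bound $|Pf| \leq P|f|$ and proving that instead. My presentation would carry out the Jordan-decomposition calculation explicitly, record the inequality it actually yields, and flag the direction issue as the point requiring clarification.
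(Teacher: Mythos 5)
Your proposal is sound, and it is worth knowing that the paper itself offers no proof of this theorem: it is quoted from \cite{LM} (p.~33), so the only benchmark is the standard argument there, which your Jordan-decomposition plan reproduces essentially verbatim. Part (I) as you give it is complete and correct: positivity of $Pf^{\pm}$ and the triangle inequality give $|Pf| \le Pf^{+} + Pf^{-} = P|f|$ pointwise, and integrating while applying condition (b) of the definition of Markov operator to $f^{+}$ and $f^{-}$ yields $\|Pf\| \le \|f^{+}\| + \|f^{-}\| = \|f\|$.

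Your diagnosis of part (II) is also correct, and you may state it with more confidence than a mere flag: the inequality as printed is a transcription error. The inequality actually established in \cite{LM} is $|Pf(x)| \le P|f(x)|$ a.e., which is exactly what your computation delivers; no proof of the printed $\ge$ can exist because it is false for Markov operators in general. A concrete counterexample: on $X=[0,1]$ with Lebesgue measure, let $P$ be the Frobenius--Perron operator of the dyadic map $T(x)=2x \bmod 1$, so that $Pf(x)=\frac{1}{2}\left[f\left(\frac{x}{2}\right)+f\left(\frac{x+1}{2}\right)\right]$. Taking $f=1_{[0,1/2)}-1_{[1/2,1)}$ gives $Pf \equiv 0$ while $P|f| = P1_{X} \equiv 1$, so $|Pf| \ge P|f|$ fails everywhere. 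Note that this example also closes off the first repair you contemplate: restricting to Frobenius--Perron operators does not rescue the printed direction, since the dyadic $P$ is itself such an operator; equality $|Pf| = P|f|$ holds only in essentially invertible situations, e.g.\ for the automorphisms of equation (1), where $Pf = f\circ T^{-1}$. So your second reading is the right one --- prove $|Pf| \le P|f|$, as your decomposition already does --- and it is worth adding that with this correction the logical order of the source is restored: part (I) is then an immediate corollary of part (II), since $\|Pf\| = \int_{X}|Pf|\,d\mu \le \int_{X}P|f|\,d\mu = \|f\|$ by condition (b) applied to $|f| \ge 0$.
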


\nd The notion of \emph{fixed point} of a Markov operator
is fundamental to establishing the approach to equilibrium of a density (see \cite{LM}, pag. 35).

\begin{definition}(\emph{Fixed Point})\label{fixed point}
Let $P$ be a Markov operator. If $f \in L^{1} $ with $Pf=f$ then $f$
is called a \emph{fixed point} of P. In a more general way, any $f
\in D(X,\Sigma ,\mu)$ that satisfies $Pf = f$ is called a
\emph{stationary density} of P.
\end{definition}
\noindent A family of automorphisms $\{T_t\}_{t\in \mathbb{R}}$ representing the evolution of any dynamical system is a special class of
Markov operators called \emph{Frobenius-Perron operators}. They are
defined as follows (see \cite{LM}, pag. 36).

\begin{definition}(\emph{Frobenius-Perron Operator})\label{perron operator}
Given a measure space $(X,\Sigma,\mu)$ and $T:X\rightarrow X $ a
\emph{non singular automorphism} (i.e. $\mu(T^{-1}(A))=0$ for all $A
\in \Sigma$ such that $\mu(A)=0$) the unique operator
$P:L^{1}\rightarrow L^{1}$ defined for all $A \in \Sigma$ by the
equation

\begin{equation}\label{frobenius}
\int_{A}Pf(x)\mu(dx)=\int_{T^{-1}(A)}f(x)\mu(dx)
\end{equation}
is called the \emph{Frobenius-Perron operator} corresponding to T.
\end{definition}
\nd From Eq. \eqref{frobenius} we have that Frobenius-Perron operator
is linear and has the following properties (see \cite{LM},
pag. 37).

\begin{theorem}\label{perron properties}
Let T be an automorphism. If $P$ and $P_n$ are the Frobenius-Perron
operators corresponding to $T$ and $T^{n}$ respectively. Then we
have that

\begin{enumerate}

\item[$(I)$] $\int_{X}Pf(x)\mu(dx)=\int_{X}f(x)\mu(dx)$

\item[$(II)$] $P_n=P^{n}$

\end{enumerate}
\end{theorem}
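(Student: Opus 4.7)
The plan is to prove both statements directly from the defining identity
$\int_{A}Pf(x)\mu(dx)=\int_{T^{-1}(A)}f(x)\mu(dx)$ appearing in Definition \ref{perron operator}, which uniquely characterises $P$ among operators on $L^{1}$.

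For part (I), I would simply specialise the defining equation to $A=X$. Since $T$ is an automorphism (in particular a bijection from $X$ onto $X$), one has $T^{-1}(X)=X$, and therefore $\int_{X}Pf(x)\mu(dx)=\int_{X}f(x)\mu(dx)$. This is essentially a one-line observation; the only conceptual point is that $T^{-1}(X)=X$ uses bijectivity, which is built into the notion of automorphism (see equation (\ref{automorphism})).

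For part (II), I would argue by induction on $n$, exploiting the fact that $T^{n}$ is itself a non-singular automorphism so that $P_{n}$ is well defined, together with the identity $T^{-n}(A)=T^{-1}\bigl(T^{-(n-1)}(A)\bigr)$. The base case $n=1$ is trivial. For the inductive step, pick any $A\in\Sigma$ and compute
\begin{equation*}
\int_{A}P_{n}f(x)\,\mu(dx)=\int_{T^{-n}(A)}f(x)\,\mu(dx)=\int_{T^{-1}(B)}f(x)\,\mu(dx),
\end{equation*}
where $B:=T^{-(n-1)}(A)\in\Sigma$ since $T$ is non-singular and measurable. Applying the definition of $P$ with $B$ in place of $A$ rewrites this as $\int_{B}Pf(x)\,\mu(dx)$, which in turn equals $\int_{A}P_{n-1}(Pf)(x)\,\mu(dx)$ by the definition of $P_{n-1}$. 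Using the inductive hypothesis $P_{n-1}=P^{n-1}$, this is $\int_{A}P^{n}f(x)\,\mu(dx)$. Since the identity holds for every $A\in\Sigma$, the uniqueness part of Definition \ref{perron operator} (or equivalently a standard $\sigma$-algebra argument) forces $P_{n}f=P^{n}f$ in $L^{1}$.

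I do not expect a serious obstacle here: the only subtlety is checking that the intermediate sets $T^{-k}(A)$ all lie in $\Sigma$ and that the measure-theoretic uniqueness statement (\emph{two $L^{1}$ functions with the same integral over every $A\in\Sigma$ must coincide a.e.}) is legitimately invoked, both of which follow from the hypothesis that $T$ is a measurable non-singular automorphism. Once these points are acknowledged, the identity $P_{n}=P^{n}$ drops out of the recursive application of the defining relation (2).
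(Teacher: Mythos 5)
The paper does not prove this theorem at all: it is quoted verbatim from Lasota--Mackey (\cite{LM}, p.~37), so there is no internal proof to compare against. Your argument is correct and is essentially the standard textbook proof --- part (I) by setting $A=X$ in the defining relation (note that $T^{-1}(X)=X$ holds for any map into $X$, bijectivity is not actually needed there), and part (II) by induction using $(T^{n})^{-1}(A)=T^{-1}\bigl(T^{-(n-1)}(A)\bigr)$ together with the a.e.\ uniqueness of an $L^{1}$ function determined by its integrals over all $A\in\Sigma$.
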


\nd The adjoint of a Frobenius-Perron operator is defined
as follows (see \cite{LM}, pag. 42).

\begin{definition}(\emph{Koopman Operator})\label{perron operator}
Given a measure space $(X,\Sigma,\mu)$ and $T:X\rightarrow X $ a
\emph{non singular automorphism}, the unique operator
$U:L^{\infty}\rightarrow L^{\infty}$ defined for all $f \in
L^{\infty}$ by

\begin{equation}\label{koopman}
Uf(x)=f(T(x))
\end{equation}
is called the \emph{Koopman operator} corresponding to T.
\end{definition}

\noindent From Eq. \eqref{koopman} we have that Koopman operator is
linear and has the following properties (see \cite{LM}, pag.
43):

\begin{theorem}\label{Koopman properties}
Let T be an automorphism. If $U$ and $U_n$ are the Koopman operators
corresponding to $T$ and $T^{n}$ respectively. Then

\begin{enumerate}

\item[$(I)$] $\|Uf\|_{L^{\infty}}\leq \|f\|_{L^{\infty}}$

\item[$(II)$] $U_n=U^{n}$

\item[$(III)$] $\langle P_n f, g\rangle = \langle f, U_n g\rangle$ \,\,\,\,\,\ for every $f \in L^{1}, g \in L^{\infty}$, $n \in \mathbb{N}_0$

\end{enumerate}

where $\langle f,g\rangle=\int_{X}f(x)g(x)\mu(dx)$ for all $f
\in L^{1}, g \in L^{\infty}$.
\end{theorem}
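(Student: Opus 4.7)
The plan is to treat the three items in the order (I), (II), (III), since each builds on routine manipulations of the defining formula $Uf(x)=f(T(x))$ together with the Frobenius--Perron definition and the measure-preserving property of $T$.

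For (I), I would work directly from $(\ref{koopman})$. Since $T$ is a non-singular automorphism and $|f(y)|\le \|f\|_{L^\infty}$ for $\mu$-almost every $y\in X$, pulling the bound back through $T$ gives $|Uf(x)|=|f(T(x))|\le \|f\|_{L^\infty}$ for $\mu$-almost every $x$, because the preimage under $T$ of a $\mu$-null set is $\mu$-null. Taking essential supremum yields the claimed contractive inequality. (In fact equality holds when $T$ is measure-preserving, but only the inequality is needed here.)

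For (II), I would proceed by induction on $n$. The base case $n=1$ is the definition. Assuming $U^{n-1}f(x)=f(T^{n-1}(x))$, I compute
\begin{equation*}
U^{n}f(x)=U(U^{n-1}f)(x)=(U^{n-1}f)(T(x))=f(T^{n-1}(T(x)))=f(T^{n}(x))=U_{n}f(x),
\end{equation*}
where the key step is the associativity $T^{n-1}\circ T=T^{n}$ for the iterates of the automorphism. Nothing beyond the definition of $U$ and of $T^{n}$ is used.

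For (III), the natural strategy is a three-step density argument. First I check the identity on characteristic functions $g=\chi_{A}$ with $A\in\Sigma$: by the defining equation of the Frobenius--Perron operator applied to $T^{n}$ and the identity $\chi_{T^{-n}(A)}=\chi_{A}\circ T^{n}=U_{n}\chi_{A}$, one has
\begin{equation*}
\langle P_{n}f,\chi_{A}\rangle=\int_{A}P_{n}f\,d\mu=\int_{T^{-n}(A)}f\,d\mu=\int_{X}f\,(U_{n}\chi_{A})\,d\mu=\langle f,U_{n}\chi_{A}\rangle.
\end{equation*}
Second, I extend by linearity to simple functions. Third, I extend to arbitrary $g\in L^{\infty}$ by approximating $g$ pointwise and in the weak-$*$ sense by simple functions $g_{k}$ with $\|g_{k}\|_{L^{\infty}}\le\|g\|_{L^{\infty}}$ and passing to the limit inside both pairings, using $f\in L^{1}$ and dominated convergence. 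The main obstacle, and the only step that requires care, is this last approximation: I need to justify interchanging the limit with the integral on the right-hand side, which is where boundedness of the Koopman operator from part (I) is used to control $U_{n}g_{k}$ uniformly by $\|g\|_{L^{\infty}}$ so that dominated convergence against the $L^{1}$ factor $f$ applies.
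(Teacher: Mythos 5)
The paper does not prove this theorem; it simply quotes it from Lasota--Mackey (\cite{LM}, p.~43), and your argument is the standard textbook proof of that result: (I) from non-singularity of $T$ pulling back null sets, (II) by induction on the iterates, and (III) by verifying the adjoint identity on characteristic functions via the defining equation of the Frobenius--Perron operator and then extending through simple functions to all of $L^{\infty}$ by dominated convergence. All three steps are correct as written, so your proposal coincides with the proof the paper is implicitly relying on.
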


\nd Now we recall the \emph{Ergodic Hierarchy} of dynamical systems
(see \cite{LM}, pag. 66).

\begin{theorem}(\emph{Ergodic, Mixing and Exact})\label{EH}
Let $(X,\Sigma,\mu)$ be a normalized measure space, $T:X \rightarrow
X$ an automorphism and $P$, $U$ the Frobenius-Perron and Koopman
operators corresponding to $T$. Then:

\begin{enumerate}

\item[$(a)$] $T$ is \emph{ergodic} $\Leftrightarrow$ $lim_{n\rightarrow \infty} \frac{1}{n}\sum_{k=0}^{n}\langle P^{k} f, g\rangle
= lim_{n\rightarrow \infty} \frac{1}{n}\sum_{k=0}^{n} \langle f, U_k
g\rangle = \langle f,1\rangle \langle 1,g \rangle$
\,\,\,\,\,\,\,\,\,\,\,\,\,\,\,\ for all $f \in L^{1}, g \in
L^{\infty}$

\item[$(b)$] $T$ is \emph{mixing} $\Leftrightarrow$ $lim_{n\rightarrow \infty} \langle P^{n} f, g\rangle
= lim_{n\rightarrow \infty} \langle f, U_n g\rangle = \langle
f,1\rangle \langle 1,g \rangle$
\,\,\,\,\,\,\,\,\,\,\,\,\,\,\,\,\,\,\,\,\,\,\,\,\,\,\,\,\,\,\,\,\,\,\,\,\,\,\,\,\,\,\,\,\,\,\,\,\,\,\,\,\,\,\,\,\,\,\,\,\
for all $f \in L^{1}, g \in L^{\infty}$

\item[$(c)$] $T$ is \emph{exact} $\Leftrightarrow lim_{n\rightarrow \infty} \| P^{n} f - \langle f,1\rangle \|= 0$ \,\,\,\,\,\ for all $f \in
L^{1}$

\end{enumerate}

\end{theorem}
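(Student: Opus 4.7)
My plan is to establish each of the three equivalences by reducing it, via the duality $\langle P^n f,g\rangle=\langle f,U^n g\rangle$ from Theorem \ref{Koopman properties}(III), to the classical measure-theoretic characterization of the corresponding ergodic property on indicator pairs, and then extending by linearity and approximation to arbitrary $(f,g)\in L^{1}\times L^{\infty}$. The extension step is uniform and routine thanks to $\|P^n f\|_{1}\leq\|f\|_{1}$ (Theorem \ref{markov properties}(I)) and $\|U^n g\|_{\infty}\leq\|g\|_{\infty}$ (Theorem \ref{Koopman properties}(I)) together with the norm density of simple functions in both $L^{1}$ and $L^{\infty}$; the genuine content lies in the characterizations on indicator pairs themselves.

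For part (b), I would begin from the classical definition of mixing, $\mu(T^{-n}A\cap B)\to\mu(A)\mu(B)$ for all $A,B\in\Sigma$, and rewrite
\[
\mu(T^{-n}A\cap B)=\int_X\chi_B(x)\,\chi_A(T^n x)\,d\mu(x)=\langle\chi_B,U^n\chi_A\rangle=\langle P^n\chi_B,\chi_A\rangle,
\]
while $\mu(A)\mu(B)=\langle\chi_B,1\rangle\langle 1,\chi_A\rangle$. This gives the desired identity on indicator pairs; bilinearity extends it to pairs of simple functions, and the approximation argument above closes the proof. Part (a) is completely parallel once the sequence $\langle P^n f,g\rangle$ is replaced by its Ces\`{a}ro mean, using the equivalent formulation of ergodicity as $\frac{1}{n}\sum_{k=0}^{n-1}\mu(T^{-k}A\cap B)\to\mu(A)\mu(B)$, which is the mean ergodic theorem applied to indicators in $L^{2}$.

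For part (c), I would invoke the standard equivalence from the Lasota--Mackey framework between exactness of $T$ and $\|P^n f-1\|_{L^{1}}\to 0$ for every density $f\in D(X,\Sigma,\mu)$. Since $\langle f,1\rangle=\int f\,d\mu=1$ when $f$ is a density, this is exactly the claimed limit on $D(X,\Sigma,\mu)$. A general $f\in L^{1}$ is handled by the Jordan decomposition $f=f^{+}-f^{-}$: when $f^{\pm}\neq 0$, normalize each to a density, apply the density result, and recombine using linearity of $P^n$ and the triangle inequality.

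The principal analytic obstacle is the equivalence invoked in part (c) between measure-theoretic exactness of $T$ and $L^{1}$-convergence of $P^n f$ to its mean: this is not mere bookkeeping through the Perron--Koopman duality but rests on contractivity and monotonicity of $P$ together with a structural description of the invariant subspace spanned by the constant density. The ergodic and mixing cases, by contrast, are essentially repackagings of the classical definitions through the duality and are conceptually cleaner; their only nontrivial ingredient is the mean ergodic theorem needed to justify the Ces\`{a}ro formulation in (a).
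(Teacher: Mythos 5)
First, a point of comparison: the paper does not prove this theorem at all. It is introduced with ``We now recall the Ergodic Hierarchy for dynamical systems (see \cite{LM}, pag.~68)'' and is simply Theorem 4.4.1 of Lasota--Mackey quoted verbatim. So there is no in-paper argument to measure you against; any honest proof sketch already does more than the paper does. Against that standard, your treatment of parts (a) and (b) is the correct and standard route: the identity $\mu(T^{-n}A\cap B)=\langle \chi_B,U^n\chi_A\rangle=\langle P^n\chi_B,\chi_A\rangle$ reduces the operator statement to the classical correlation-decay (resp.\ Ces\`aro) characterization on indicators, and the extension to general $(f,g)\in L^1\times L^\infty$ by bilinearity, density of simple functions, and the uniform bounds $\|P^nf\|_1\le\|f\|_1$, $\|U^ng\|_\infty\le\|g\|_\infty$ is a routine $3\varepsilon$ argument, exactly as you say. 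The one external input in (a), the mean ergodic theorem identifying the limit of $\frac1n\sum_k U^kg$ with $\langle 1,g\rangle$ under ergodicity, is legitimately citable.

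Part (c), however, is circular as written. The ``standard equivalence from the Lasota--Mackey framework between exactness of $T$ and $\|P^nf-1\|_{L^1}\to0$ for every density $f$'' \emph{is} statement (c) restricted to densities; since $\langle f,1\rangle=1$ for $f\in D(X,\Sigma,\mu)$ and the general case follows from the density case by the trivial Jordan-decomposition step you describe, invoking that equivalence is invoking the theorem you are supposed to prove. The only mathematical content you actually supply for (c) is the (easy) normalization-and-recombination step; the genuine content --- that measure-theoretic exactness, $\lim_n\mu(T^n(A))=1$ for all $A$ with $\mu(A)>0$, implies $L^1$-convergence of $P^nf$ to the constant $1$, and conversely --- is precisely what is missing. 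You correctly identify this as ``the principal analytic obstacle,'' but naming an obstacle is not the same as clearing it: the forward direction in Lasota--Mackey rests on a nontrivial argument (a result of Lin's type, using the support behavior of $P^n$ under $T^n$ and the triviality of the tail $\sigma$-algebra), not merely on ``contractivity and monotonicity of $P$.'' If you intend (c) to be a citation rather than a proof, say so explicitly --- in which case you are doing for (c) exactly what the paper does for the whole theorem.
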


\nd From these definitions (see \cite{LM} pag. 73) it follows that \footnote{Ergodic, mixing and exact correspond to C\`{e}saro limit, weak limit and strong limit respectively. The terms ``C\`{e}saro", ``weak" and ``strong" indicate the type of convergence of sequence $\{P^{n} f\}$ to $1$ (see theorem 4.4.1 of \cite{LM}).}

\begin{equation}
EXACT \,\ \Longrightarrow \,\ MIXING \,\ \Longrightarrow \,\ ERGODIC
\end{equation}

\nd It should be noted that levels are substantially different from each other \footnote{A full distinction between ergodic, mixing and exact is illustrated by first six successive iterates of a random distribution of $1000$ points in $X=[0,1]\times[0,1]$ with the corresponding transformations $S_{ergodic}(x,y)=(\sqrt{2}+x,\sqrt{3}+y)$ (mod 1), $S_{mixing}(x,y)=(x+y,x+2y)$ (mod 1) and $S_{exact}(x,y)=(3x+y,x+3y)$ (mod 1).
In each case the effect of transformation is to move around the space, spread throughout space and quickly spread throughout the space corresponding to the ergodic, mixing and exact levels respectively (see fig. 4.3.3, fig. 4.3.4 and fig. 4.3.5 of \cite{LM}).}.
To end this section we introduce the notion of constrictive
operator which allow us to ensure the existence of an equilibrium
density (see \cite{LM}, pag. 87).

\begin{definition}(\emph{constrictive operator})\label{constrictive}
A Markov operator P will be called \emph{constrictive} if there
exist a precompact set $\mathcal{F}\subseteq L^{1}$ such that for
all $f \in D(X,\Sigma ,\mu)$:

\begin{equation}
lim_{n\rightarrow \infty} d(P^{n}f,\mathcal{F}) = lim_{n\rightarrow
\infty} inf _{g \in \mathcal{F}}\|P^{n}f - g \| = 0
\end{equation}
\end{definition}

\nd A relevant result is that every Markov constrictive operator
has an equilibrium density (see \cite{LM}, pag. 87).

\begin{theorem}\label{constrictive MO}
Let $(X,\Sigma,\mu)$ be a normalized measure space, and
$P:L^{1}\rightarrow L^{1}$ a constrictive Markov operator. Then P
has a stationary density, i.e there is a $f_{\ast}\in L^{1}$ such
that $Pf_{\ast} = f_{\ast}$.
\end{theorem}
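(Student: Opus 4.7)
The plan is to use a Cesàro averaging argument, exploiting constrictivity to extract a convergent subsequence whose limit must be fixed by $P$.

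First I would pick any density $f_0\in D(X,\Sigma,\mu)$ and form the time averages
\begin{equation}
A_n f_0 \;=\; \frac{1}{n}\sum_{k=0}^{n-1} P^{k}f_0.
\end{equation}
Because $P$ is Markov, each $P^{k}f_0$ is a density, so $A_n f_0$ is a convex combination of densities, hence itself a density. In particular $\|A_n f_0\|=1$ for all $n$.

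Next I would show that the sequence $\{A_n f_0\}$ is relatively compact in $L^1$. The idea is: by constrictivity, $d(P^{k}f_0,\mathcal{F})\to 0$, so for each $k$ we may choose $g_k\in\mathcal{F}$ with $\|P^{k}f_0-g_k\|\le\varepsilon_k$ where $\varepsilon_k\to 0$. Then
\begin{equation}
\Bigl\|A_n f_0 - \frac{1}{n}\sum_{k=0}^{n-1} g_k\Bigr\| \;\le\; \frac{1}{n}\sum_{k=0}^{n-1}\varepsilon_k \;\longrightarrow\; 0,
\end{equation}
so $A_n f_0$ is asymptotically close to the closed convex hull $\overline{\mathrm{conv}}(\mathcal{F})$. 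Since $\mathcal{F}$ is precompact in the Banach space $L^1$, by Mazur's theorem its closed convex hull is again precompact. Therefore $\{A_n f_0\}$ lies within arbitrarily small $L^1$-distance of a precompact set, and so it is itself precompact. Extract a subsequence $A_{n_j}f_0\to f_\ast$ in $L^1$; the norm-closedness of $D$ forces $f_\ast\in D(X,\Sigma,\mu)$.

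Finally, I would verify that $f_\ast$ is fixed. Observe the telescoping identity
\begin{equation}
P(A_n f_0)-A_n f_0 \;=\; \frac{1}{n}\bigl(P^{n}f_0-f_0\bigr),
\end{equation}
whose $L^1$-norm is at most $2/n\to 0$ by Theorem \ref{markov properties}(I) and $\|f_0\|=1$. Combining this with the continuity of $P$ (also from the contractive property) and passing to the limit along the subsequence $n_j$ gives $Pf_\ast=f_\ast$, as required.

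The main obstacle is the precompactness step: one must justify passing from the precompactness of $\mathcal{F}$ (which controls only the orbit $P^{k}f_0$) to the precompactness of the averages $A_n f_0$. This is where Mazur's theorem on convex hulls of precompact sets is essential; the remaining ingredients are routine manipulations using only the Markov conditions $(a)$ and $(b)$ in Definition \ref{markov operator}.
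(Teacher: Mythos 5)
Your proof is correct, and it is essentially the standard Kakutani--Yosida argument: Ces\`aro averages $A_nf_0$, precompactness of $\{A_nf_0\}$ via constrictivity together with Mazur's theorem on the closed convex hull of a precompact set, and the telescoping identity $PA_nf_0-A_nf_0=\tfrac{1}{n}(P^nf_0-f_0)$ combined with the contractivity of $P$. The paper itself states this theorem without proof, citing Lasota--Mackey (\cite{LM}, p.~87), and your argument is essentially the one given there; every step you outline (choice of $g_k$ with $\varepsilon_k\to 0$, the finite-net argument showing a sequence asymptotically close to a compact set is precompact, norm-closedness of $D$) goes through without difficulty.
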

\nd The existence of equilibrium densities can be considered relevant to obtain a theoretical framework for quantum chaos. For instance, in quantum billiards the statement of quantum ergodicity is equivalent to the statement of uniform equidistribution of probability density for eigenstates. Moreover, this fact can be
translated to quantum language through Wigner transform such
that equilibrium state of any closed quantum system with continuous spectrum is represented by the weak limit $\hat{\rho}_{\ast}$
(see \cite{MARIO OLIMPIA}, pag. 889 eq. (3.28)).

\section{The Spectral Decomposition Theorem of Dynamical Systems}

With all the mathematical background of previous section we are
able to present one of the main results of dynamical
systems theory called the \emph{Spectral Decomposition Theorem} (SDT)
(see \cite{LM}, pag. 88).

\begin{theorem}(\emph{The Spectral Decomposition Theorem}\emph{(version I)})\label{SDT2}
Let P be a Markov constrictive operator. Then there is an integer r,
two sequences of nonnegative functions $g_i \in D(X,\Sigma ,\mu)$,
$k_i \in L^{\infty}$, $i=1,...,r$, and an operator
$Q:L^{1}\rightarrow L^{1}$ such that for all $f \in L^{1}$, $Pf$ may
be written as
\begin{equation}\label{SDT1 decomposition}
Pf(x)=\sum_{i=1}^{r}\lambda_i(f)g_i(x) + Qf(x)
\end{equation}

\nd where

\begin{equation}\label{lambda}
\lambda_i(f)= \int_{X} f(x)k_i(x)\mu(dx)=\langle f(x),k_i(x)\rangle
\end{equation}

\nd The functions $g_i$ and the operator $Q$ have the following
properties:

\begin{enumerate}

\item[$(I)$] $g_i(x)g_j(x)=0$ for all $i\neq j$, so that functions $g_i$ have disjoint supports.

\item[$(II)$] For each integer $i$ there exists a unique integer $\alpha (i)$ such that $Pg_i=g_{\alpha (i)}$. Further $\alpha (i) \neq \alpha (j)$ for $i \neq j$ thus the operator $P$ just permutes the functions $g_i$.

\item[$(III)$] $\|P^{n}Qf\| \rightarrow 0$ \,\,\, as \,\,\, $n \rightarrow \infty$ for every $f \in L^{1}$.

\end{enumerate}
\end{theorem}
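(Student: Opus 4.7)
The plan is to combine constrictivity with the positive-lattice structure of Markov operators to extract a finite family of densities with mutually disjoint supports on which $P$ acts as a permutation, and to show that every orbit decomposes into such a stable component plus a norm-decaying remainder.

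\textbf{Step 1 ($\omega$-limits and the attractor).} First I would attach to each density $f\in D(X,\Sigma,\mu)$ its $\omega$-limit set $\omega(f) := \bigcap_{N\geq 0}\overline{\{P^n f : n\geq N\}}$; constrictivity (Definition \ref{constrictive}) drags $\{P^n f\}$ into the precompact $\mathcal{F}$, so $\omega(f)$ is nonempty, compact in $L^{1}$, and $P$-invariant. Injectivity of $P|_{\omega(f)}$ is the first use of the Markov structure: if $Pu=Pv$ for densities $u,v$, then decomposing $u-v=(u-v)^{+}-(u-v)^{-}$ into its disjoint positive and negative parts and applying the norm identity $\|Pw\|=\|w\|$ for $w\geq 0$ of Definition \ref{markov operator} forces $u=v$. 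Let $V$ be the closed linear span of $\bigcup_{f}\omega(f)$ and put $K:=V\cap D(X,\Sigma,\mu)$; this is the compact convex set of ``limit densities'' on which $P$ acts bijectively.

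\textbf{Step 2 (finiteness, disjointness and the permutation $\alpha$).} The structural core of the proof is to establish that $K$ has only finitely many extreme points $g_{1},\dots,g_{r}$, that $P$ permutes them, and that they have pairwise disjoint supports. I would argue as follows: if $g$ is extreme in $K$, monotonicity of $P$ combined with norm preservation on the positive cone forces $Pg$ to be extreme as well; the precompact orbit $\{P^{n}g\}$ then lives entirely among extreme points of a compact convex set and must eventually cycle, producing both the integer $r$ and the permutation $\alpha$ of property $(II)$. For disjointness of supports (property $(I)$), I would invoke a lattice argument: if two extreme densities $g_{i},g_{j}$ overlapped, one could extract $g_{i}\wedge g_{j}$ and exhibit a nontrivial convex decomposition of each inside $K$, contradicting extremality; the contraction inequality of Theorem \ref{markov properties}$(I)$ keeps the pieces inside the space.

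\textbf{Step 3 (dual functionals $k_{i}$, remainder $Q$, and decay).} With disjoint supports $A_{i}:=\mathrm{supp}\,g_{i}$ in hand, set $k_{i}:=\chi_{A_{i}}\in L^{\infty}$, so $\lambda_{i}(f)=\langle f,k_{i}\rangle$ extracts the mass of $f$ over $A_{i}$. Every $f\in L^{1}$ then admits a decomposition $f=\sum_{i}\alpha_{i}(f)\,g_{i}+r(f)$ with remainder $r(f)$ outside $\mathrm{span}\{g_{i}\}$; applying $P$ and relabeling via the permutation $\alpha$ yields the claimed form $Pf=\sum_{i}\lambda_{i}(f)\,g_{i}+Qf$ with $Qf:=P r(f)$. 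Property $(III)$ reduces to $\|P^{n}r(f)\|\to 0$, which follows because $r(f)$ has zero coefficient along every $g_{i}$ while constrictivity forces every $\omega$-limit in $L^{1}$ to lie inside $\overline{\mathrm{span}\{g_{i}\}}$; the two are compatible only if the orbit norm tends to zero.

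\textbf{Main obstacle.} The decisive difficulty is Step 2. Constrictivity alone supplies only compactness of the attractor, and positivity alone only monotonicity of $P$; it is their interplay, namely the identity $\|Pf\|=\|f\|$ for $f\geq 0$ acting inside a compact convex set of densities, that rigidifies the lattice structure of $K$ into a finite permutation action. Once this combinatorial core is secured, the functionals $k_{i}$, the operator $Q$, and the decay property all follow by bookkeeping.
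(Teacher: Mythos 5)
First, note that the paper does not actually prove this statement: Theorem \ref{SDT2} is imported verbatim from Lasota and Mackey \cite{LM} (p.~88), so there is no internal proof to compare against and your proposal must be judged against the known proof of that result. Your high-level architecture (constrictive orbits accumulate on a finite family of mutually singular densities that $P$ permutes, plus a decaying remainder) is indeed the shape of the standard argument, but several load-bearing steps fail as stated. (i) The injectivity argument in Step 1 is invalid: if $Pu=Pv$ for densities $u\neq v$, writing $w=u-v=w^{+}-w^{-}$ gives $Pw^{+}=Pw^{-}$ with $\|Pw^{+}\|=\|w^{+}\|=\|w^{-}\|=\|Pw^{-}\|$, which is perfectly consistent; Markov operators are not injective on densities (one can map two distinct point masses to the same density), and injectivity of $P$ on the limit set is a \emph{consequence} of the structure theorem, not an input available this early. (ii) In Step 2, the claim that a precompact orbit among extreme points of a compact convex set ``must eventually cycle'' is a non sequitur (an irrational rotation has a precompact orbit that never recurs exactly), and norm-compact convex subsets of $L^{1}$ can have infinitely many extreme points, so finiteness of $r$ does not follow from extremality. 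The actual mechanism runs in the opposite order: one first proves mutual singularity of the limiting densities, so that distinct $g_i$ are at $L^{1}$-distance $2$ from one another, and only then does precompactness of $\mathcal{F}$ force their number to be finite.

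(iii) Step 3's choice $k_{i}=\chi_{A_{i}}$ is concretely wrong. Take $X=\{0,1\}$ with $P$ sending the point mass at $0$ to the point mass at $1$ and fixing the point mass at $1$: then $r=1$, $g_{1}$ is supported on $A_{1}=\{1\}$, and for $f$ the point mass at $0$ one has $Pf=g_{1}$, so $\lambda_{1}(f)=1$, whereas $\int_{X}f\,\chi_{A_{1}}\,\mu(dx)=0$. The functionals $k_{i}$ must record where the mass of $f$ \emph{eventually} lands --- they are obtained as weak-$\ast$ limits of Koopman iterates of indicator functions along the cycle of $\alpha$ --- not where $f$ sits initially; with the wrong $k_{i}$ the remainder $Qf$ does not have vanishing asymptotic coefficients and the decay in $(III)$ cannot be closed (your closing argument also applies constrictivity to $r(f)$, which is a signed function, whereas Definition \ref{constrictive} only constrains orbits of densities, so one must split into positive and negative parts and track the coefficients, which again requires the correct $k_{i}$). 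In short, the skeleton is right, but the combinatorial core --- disjointness first, finiteness from precompactness, and the dual construction of the $k_{i}$ --- is either missing or replaced by arguments that do not hold.
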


\nd Assuming that time
evolution has a constrictive Perron-Frobenius operator then Eq. \eqref{SDT2} describes the evolution of any
density having an initial term which oscillates
between densities $g_i$. We note that $Qf$ is associated with a relaxation process which we will analyze in
section 5. By property (II) of Theorem \eqref{SDT2} we have

\begin{equation}
P^{n}f(x)=\sum_{i=1}^{r}\lambda_i(f)g_{\alpha^{n}(i)}(x) +
P^{n-1}Qf(x)=\sum_{i=1}^{r}\lambda_{\alpha^{-n}(i)}(f)g_i(x) +
Q_{n}f(x)
\end{equation}

\nd where $Q_{n}f(x)=P^{n-1}Qf(x)$ and $\{\alpha^{-n}(i)\}$ is the
inverse permutation of $\{\alpha^{n}(i)\}$.
When the measure space is normalized and Markov
operator $P$ has a constant stationary density $f_{\ast}$ (e.g., if P
is a Frobenius-Perron operator this is equivalent to $\mu_f$
invariant, see \cite{LM}, pag. 46) Spectral Decomposition
Theorem takes the following compact form (see \cite{LM}, pag. 90).

\begin{theorem}(\emph{Version (II) of The Spectral Decomposition Theorem})\label{SDT}
Let $(X,\Sigma ,\mu)$ be a normalized measure space and
$P:L^{1}\rightarrow L^{1}$ a constrictive Markov operator. If P has
a stationary density then the representation of $P^{n}f$ takes the
simple form for all $f \in L^{1}$

\begin{equation}\label{SDT2 decomposition}
P^{n}f(x)=\sum_{i=1}^{r}\lambda_{\alpha^{-n}(i)}(f)\overline{1}_{A_i}(x)
+ Q_{n}f(x)
\end{equation}

\nd where

\begin{equation}\label{characteristic1}
\overline{1}_{A_i}(x)=[1/\mu(A_i)]1_{A_i}
\end{equation}

\begin{equation}\label{characteristic1bis}
\bigcup_{i}A_i=X \,\,\,\ with \,\,\,\ A_i \cap A_j = \emptyset
\,\,\,\ for \,\,\,\ i\neq j
\end{equation}

\nd and \footnote{In equations (9)-(12) the index $i$ runs over a countable set.}

\begin{equation}
\mu(A_i) = \mu(A_j) \,\,\,\ if \,\,\,\ j=\alpha^{n}(i) \,\,\,\ for
\,\,\,\ some\,\,\,\ n.
\end{equation}

\end{theorem}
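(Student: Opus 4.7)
The plan is to deduce this compact form from Theorem~\ref{SDT2} (version~I) by exploiting the extra hypothesis that the stationary density is the constant function $1$. Because $(X,\Sigma,\mu)$ is normalized, a constant stationary density $f_{\ast}$ of $P$ is forced by $\int f_{\ast}\,d\mu = 1 = \mu(X)$ to satisfy $f_{\ast}\equiv 1$, so $P1=1$.

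First, I apply Theorem~\ref{SDT2} to obtain the decomposition $Pf = \sum_{i=1}^{r}\lambda_i(f)g_i + Qf$ with its three properties, and iterate as in the display preceding the statement: for every $f\in L^{1}$ and $n\geq 1$,
$$P^{n}f(x) = \sum_{i=1}^{r}\lambda_{\alpha^{-n}(i)}(f)\,g_i(x) + Q_{n}f(x),\qquad \|Q_{n}f\|\to 0.$$
Specializing to $f=1$ and using $P1=1$ yields, for every $n\geq 1$,
$$1 \;=\; \sum_{i=1}^{r}\lambda_{\alpha^{-n}(i)}(1)\,g_i(x) + Q_{n}1(x).$$
The sequence $n\mapsto\alpha^{-n}$ is periodic because $\alpha$ is a permutation of the finite set $\{1,\dots,r\}$, so the partial sum $S_n := \sum_{i}\lambda_{\alpha^{-n}(i)}(1)g_i$ takes only finitely many values in $L^{1}$. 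Since $\|Q_n 1\|\to 0$ forces $S_n\to 1$, and a periodic convergent sequence must be constant, we obtain $S_n = 1$ for every $n$. The disjoint supports in property~(I) make the $g_i$ linearly independent, and comparing coefficients across two consecutive values of $n$ shows that $\lambda_i(1)$ is constant along each $\alpha$-orbit.

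The main obstacle is precisely this step: converting the asymptotic vanishing of $Q_n$ into an \emph{exact} identity $1=\sum_i c_i g_i$; the periodicity of $\alpha$ is essential, since without it the Cesàro average would give only a weak form of the identity. Once this is secured, the remainder is algebraic. Setting $A_i := \mathrm{supp}(g_i)$ and $c_i := \lambda_i(1)$, the identity $1=\sum_j c_j g_j$ restricted to $A_i$ reads $1 = c_i g_i(x)$, so $g_i = (1/c_i)1_{A_i}$. The density condition $\|g_i\|=1$ then forces $c_i=\mu(A_i)$, giving $g_i=\overline{1}_{A_i}$ as required by (\ref{characteristic1}). The pairwise disjointness from property~(I), together with $\sum_i 1_{A_i}=1$ obtained above, yields $\bigcup_i A_i=X$ with $A_i\cap A_j=\emptyset$ for $i\neq j$, which is (\ref{characteristic1bis}); and the orbit-constancy of $c_i$ translates directly into $\mu(A_i)=\mu(A_{\alpha^{n}(i)})$. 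Substituting $g_i=\overline{1}_{A_i}$ into the iterated decomposition finally produces (\ref{SDT2 decomposition}).
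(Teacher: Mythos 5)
Your argument is correct, and it is essentially the standard derivation of version~II from version~I (the paper itself supplies no proof here --- it simply cites Lasota--Mackey, p.~90 --- so there is nothing in the text to compare against beyond that reference, whose proof proceeds exactly as you do: apply $P^{n}$ to the constant density and let $Q_{n}$ die). The one point worth making explicit is that the theorem as printed only hypothesizes ``a stationary density,'' which is vacuous --- Theorem~\ref{constrictive MO} already guarantees one for any constrictive Markov operator --- and under that literal reading the conclusion is false (the $g_i$ need not be normalized indicators). You correctly read the intended hypothesis from the surrounding prose, namely that the stationary density is \emph{constant}, hence $P1=1$ on a normalized space; everything in your proof hinges on this. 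The two genuinely load-bearing steps --- upgrading $S_n\to 1$ to the exact identity $S_n\equiv 1$ via periodicity of $\alpha^{-n}$, and then reading off $g_i=(1/c_i)1_{A_i}$ on the disjoint supports with $c_i=\mu(A_i)$ forced by $\|g_i\|=1$ --- are both handled correctly, including the implicit observation that $c_i\neq 0$ since $1=c_ig_i$ a.e.\ on a set of positive measure. Minor quibbles only: the covering $\bigcup_i A_i=X$ holds a.e.\ rather than pointwise, and the exponent bookkeeping in the iterated formula $P^{n}f=\sum_i\lambda_{\alpha^{-n}(i)}(f)g_i+Q_nf$ inherits the paper's own off-by-one convention, neither of which affects the argument.
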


\nd Theorem \eqref{SDT}
characterizes the Ergodic Hierarchy levels by means of permutation
$\{\alpha^{n}(i)\}$ (see \cite{LM}, pag. 92 - 94).

\begin{theorem}(\emph{The Spectral Decomposition Theorem and Ergodic Hierarchy})\label{SDT and EH}
Let $(X,\Sigma ,\mu)$ be a normalized measure space and
$P:L^{1}\rightarrow L^{1}$ a constrictive Markov operator. Then

\begin{enumerate}

\item[$(I)$] $P$ is \emph{ergodic} $\Longleftrightarrow$ permutation $\{\alpha(1),...,\alpha(r)\}$ of sequence $\{1,...,r\}$ is \emph{cyclical} (that is, for which there no is invariant subset).

\item[$(II)$] If $r=1$ in representation of Eq. $(6)$ $\Longrightarrow$ $P$ is \emph{exact}.

\item[$(III)$] If $P$ is \emph{mixing} $\Longrightarrow$ $r=1$ in representation of Eq. $(6)$.

\end{enumerate}

\nd It should be noted that, under hypothesis of Theorem 8, $P$ mixing implies $P$ exact.
Then since $P$ exact implies $P$ mixing (see eq. (4)) it follows that $P$ exact $\Longleftrightarrow$ $P$ mixing $\Longleftrightarrow$ $r=1$ (see remark 5.5.1 of \cite{LM}).

\end{theorem}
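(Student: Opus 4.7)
The plan is to feed the representation of $P^n f$ from Theorem 3.3 into the ergodicity, mixing and exactness criteria supplied by Theorem 2.6 and let the finite permutation $\alpha$ do the work. Writing
\[
P^n f(x)=\sum_{i=1}^{r}\lambda_{\alpha^{-n}(i)}(f)\,\overline{1}_{A_i}(x)+Q_n f(x), \qquad \|Q_n f\|\to 0,
\]
and pairing with any $g\in L^{\infty}$, one has $|\langle Q_n f,g\rangle|\le\|g\|_{L^{\infty}}\|Q_n f\|\to 0$, so the error term contributes nothing to the limits or the Ces\`aro averages involved. Two arithmetic facts are used repeatedly: the partition $\bigcup_i A_i=X$ yields $\sum_{i=1}^{r}\lambda_i(f)=\langle f,1\rangle$, while the measure identity (12) forces $\mu(A_i)=1/r$ on every cycle of $\alpha$ of length $r$.

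For part (I), I Ces\`aro-average $\langle P^k f,g\rangle$. If $\alpha$ is a single $r$-cycle, then for each $i$ the orbit $\{\alpha^{-k}(i):0\le k<r\}$ exhausts $\{1,\dots,r\}$, hence
\[
\frac{1}{n}\sum_{k=0}^{n-1}\lambda_{\alpha^{-k}(i)}(f)\ \longrightarrow\ \frac{1}{r}\sum_{j=1}^{r}\lambda_j(f)=\frac{\langle f,1\rangle}{r}.
\]
Multiplying by $\overline{1}_{A_i}$, summing in $i$ and using $\mu(A_i)=1/r$ gives $\frac{1}{n}\sum_k P^k f\to \langle f,1\rangle\cdot 1_X$, which pairs with $g$ to produce exactly the ergodicity condition of Theorem 2.6(a). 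Conversely, when $\alpha$ is not cyclic it has a proper invariant subset $S\subsetneq\{1,\dots,r\}$; taking $A=\bigcup_{i\in S}A_i$, $f=\overline{1}_A$ and $g=1_{X\setminus A}$, the invariance of $S$ under every power of $\alpha$ keeps $\sum_i\lambda_{\alpha^{-k}(i)}(f)\overline{1}_{A_i}$ supported in $A$, so the Ces\`aro limit of $\langle P^k f,g\rangle$ is $0$ while $\langle f,1\rangle\langle 1,g\rangle=\mu(X\setminus A)>0$, refuting ergodicity.

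Parts (II) and (III) are then short. For (II), $r=1$ forces $A_1=X$, $\overline{1}_{A_1}=1$ and $\lambda_1(f)=\langle f,1\rangle$, so $P^n f-\langle f,1\rangle=Q_n f\to 0$ in $L^{1}$, which is exactness by Theorem 2.6(c). For (III), mixing implies ergodicity, so by (I) the permutation $\alpha$ is an $r$-cycle; then $\langle P^n f,g\rangle-\langle Q_n f,g\rangle=\sum_i\lambda_{\alpha^{-n}(i)}(f)\langle \overline{1}_{A_i},g\rangle$ is periodic in $n$ with period dividing $r$, and a periodic sequence converges only if it is constant. Testing with $f=\overline{1}_{A_1}$ and $g=1_{A_1}$ yields the sequence $\delta_{\alpha^{-n}(1),1}$, which is constant precisely when $r=1$. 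The main obstacle in the argument is the converse of (I), where one must verify that the invariant subset $S$ genuinely obstructs the ergodic identity in the limit; once the harmless $Q_n f$ correction is controlled by $\|g\|_{L^{\infty}}\|Q_n f\|\to 0$, everything else reduces to bookkeeping with finite permutations.
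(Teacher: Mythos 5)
The paper never proves this statement: it is quoted directly from Lasota and Mackey (\cite{LM}, pp.~92--94), so there is no in-paper argument to compare yours against. Your blind reconstruction is essentially the standard proof from that reference, and it is correct. The forward half of (I) --- Ces\`aro-averaging the coefficients $\lambda_{\alpha^{-k}(i)}(f)$ over a single $r$-cycle, using $\sum_{i}\lambda_i(f)=\langle f,1\rangle$ and $\mu(A_i)=1/r$ to recover $\langle f,1\rangle\langle 1,g\rangle$ --- the converse via a proper invariant subset of a non-cyclic permutation, the one-line deduction of exactness when $r=1$, and the ``a periodic sequence converges only if it is constant'' argument for (III) are all sound, and the control of the remainder by $|\langle Q_nf,g\rangle|\le\|g\|_{L^\infty}\|Q_nf\|\to 0$ is exactly the right way to dispose of it. Two points deserve tightening. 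First, in the converse of (I) you assert that the coefficients $\lambda_{\alpha^{-k}(i)}(\overline{1}_A)$ vanish for $i\notin S$; that is true but requires the identity $\lambda_j(\overline{1}_{A_i})=\delta_{ij}$, which you did not establish. It is cleaner to bypass the coefficients entirely: iterating property (II) of Theorem 6 gives $P^k\overline{1}_{A_i}=\overline{1}_{A_{\alpha^k(i)}}$, so by linearity $P^k\overline{1}_A$ is supported in $\bigcup_{j\in\alpha^k(S)}A_j=A$ and $\langle P^k\overline{1}_A,1_{X\setminus A}\rangle=0$ exactly. Second, in (III) the claim that the oscillating part has period dividing $r$ holds because $\alpha$ is a single $r$-cycle, which you obtain from (I) via ``mixing implies ergodic''; you do invoke this, but it should be stated as a prerequisite of that step rather than in passing, since for a general permutation of $\{1,\dots,r\}$ the order need not divide $r$.
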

\nd Theorem \eqref{SDT} says that
if classical Hamiltonian is such that Frobenius-Perron operator
$P$ associated to time evolution $T$ admits an equilibrium
density $f_{\ast}$ then in the asymptotic limit ($n \rightarrow \infty$)
the state of the system $U_nf(x)=f(T_n(x))$ will oscillate between
characteristic functions $\overline{1}_{A_i}(x)$ with a remainder
term $Q_{n}f(x)$ going to zero. In next section we will see that a quantum version of SDT (called QSDT)
contains relevant information about quantum spectrum. We have seen that
constrictiveness of a Markov operator $P$ and
normalization of measure space are sufficient to ensure the
existence of stationary densities and to obtain a representation of
their time evolution by means of SDT.

\section{The Quantum Version of Spectral Decomposition Theorem (QSDT)}

The aim of this paper is to obtain a quantum version of SDT, called the \emph{Quantum Version of Spectral Decomposition Theorem} (QSDT), that can be useful to study quantum
systems in the classical limit and to give an alternative framework for quantum chaos according to Berry's
definition. In this section we begin by defining the mathematical
concepts of this approach considering the observables as central
objects and the states as functionals of those. In order to apply QSDT to open quantum systems (see section 6) we consider the quantum characteristic algebra $\mathcal{A}$ whose elements
$\hat{O}\in\mathcal{A}$ are observables not necessarily self-adjoint. In other words, Hermiticity condition $\hat{O}=\hat{O}^{\dag}$ can be relaxed and this is motivated for several reasons some of them we can list below:
\begin{enumerate}
\item[(a)] The interest in the study of non-Hermitian Hamiltonians related with the interpretation of some properties such as transfer phenomena, nuclear resonances, typical of open systems.
\item[(b)] In scattering systems one can consider quantum resonances, i.e. ``quasi-stationary states" instead of scattering solutions \cite{sieber}. These resonances can play a similar role in open systems as eigenstates in closed systems and their eigenvalues are complex numbers with non zero imaginary part, see sections 5.1.1. and 6.
\item[(c)] Any measurement on a wave system drastically changes its properties by converting discrete energy levels into decaying resonances called ``quasi-stationary states" which can be described by a non-Hermitian Hamiltonian \cite{kuhl, rotten, moiseyev}. This situation will be illustrated in section 6.
\item[(d)] With the purpose to take into account non-unitary time evolutions that appear in descriptions of open quantum systems the introduction of non-Hermitian observables becomes ``natural" and fundamental from a theoretical viewpoint.
\item[(e)] Non-Hermitian operators are frequently used to mathematically represent potentials of open quantum systems describing ionization or dissociation where the system breaks up into freely moving non-interacting subsystems, see pag. 4 of \cite{moiseyev}.
\item[(f)] The complex expectation value $A_{\hat{O}}=\langle \hat{O}\rangle=|A_{\hat{O}}|e^{i\alpha}$ of a non-Hermitian observable $\hat{O}$ can be physically interpreted postulating that the absolute value $|A_{\hat{O}}|$ and the phase $\alpha$ are measurable quantities, see pag. 11 of \cite{moiseyev}.
\end{enumerate}
We point out that conclusions obtained from QSDT will be valid for both closed and open quantum systems. However, in section 6 we will only illustrate QSDT with examples of open quantum systems in the research line\footnote{We mean that examples presented in this paper belong theoretically to the category of quantum systems within the approach of \cite{zycz}, i.e. open quantum systems which are disturbed by the measuring process like microwave billiards, etc.} of \cite{zycz} which states: \emph{``The approach of linking chaos with the unpredictability of the measurement outcomes is the right one in the quantum case"}. In this paper we will not discuss this entropic approach of quantum chaos but it should be noted that\footnote{The term ``intrinsic" refers to quantum dynamical of closed quantum systems described by wave functions of integrable square and thus quantum evolution becomes almost periodic at least for the finite dimensional case. In contrast, the term ``genuine" is associated with the approach based on that successive outcomes of a measurement can form chaotic and unpredictable sequences \cite{zycz}.}

\begin{equation}\label{closed open}
\begin{split}
&closed \,\ quantum \,\ systems \Longrightarrow hermitian \,\ observables \Longrightarrow ``intrinsic \,\ quantum \,\ chaos"\\
&\\
&open\,\ quantum \,\ systems \Longrightarrow observables \,\ (herm. \,\ and \,\ non-herm.) \Longrightarrow ``genuine\,\ quantum \,\ chaos"
\end{split}
\end{equation}
\nd This distinction can be seem trivial and meaningless but it allows to place algebra of hermitian operators and unitary evolutions in the context of closed quantum systems. On the other hand, second line of Eq. \eqref{closed open} places algebra of operators (hermitian and non-hermitian) and evolutions (unitary and non-unitary) in the context of open quantum systems.
Returning to the mathematical background, the space of states is the positive cone

\begin{equation}
\mathcal{N}=\{\hat{\rho} \in
\mathcal{A}^{\prime}:\hat{\rho}(\mathbb{I})=1, \,\,\
\hat{\rho}^{\dag}=\hat{\rho}, \,\,\
\hat{\rho}(\hat{a}.\hat{a}^{\dag})\geq 0 \,\,\ for \,\,\ all \,\,\
\hat{a} \in \mathcal{A}\}
\end{equation}

\nd where the action $\hat{\rho}(\hat{O})$ of functional
$\hat{\rho} \in \mathcal{A}^{\prime}$ on the observable $\hat{O} \in
\mathcal{A}$ is denoted by $(\hat{\rho}|\hat{O})$. In other words, $(\hat{\rho}|\hat{O})$ is the mean value of $\hat{O}$ in $\hat{\rho}$, i.e.
$(\hat{\rho}|\hat{O})=\langle\hat{O}\rangle_{\hat{\rho}}=tr(\hat{\rho}\hat{O})$. When $\hat{O}=\mathbb{I}$ this action is the trace of $\hat{\rho}$ which is equal to
$tr(\hat{\rho})=\hat{\rho}(\mathbb{I})=(\hat{\rho}|\mathbb{I})=1$. We note the dual of $\mathcal{A}$ as $\mathcal{A}^{\prime}$.

In this approach the state is unknown and we only focus in the study of
expectation values $(\hat{\rho}(t)|\hat{O})$ for large times $t \rightarrow
\infty$. If for all $\hat{\rho} \in \mathcal{A}^{\prime}$ there exists a unique $\hat{\rho}_{\ast} \in
\mathcal{A}^{\prime}$ such that

\begin{equation}
lim_{t \rightarrow \infty} (\hat{\rho}(t)|\hat{O}) = lim_{t
\rightarrow \infty} (\hat{U}_t \hat{\rho}\hat{U}_t^{\dag}|\hat{O}) =
(\hat{\rho}_{\ast}|\hat{O})
\end{equation}

\nd we say that $\hat{\rho}$ has \emph{weak-limit}
$\hat{\rho}_{\ast}$ (see \cite{0} pag. 248). Functional
$\hat{\rho}_{\ast}$ is interpreted as the average value that would
result if state $\hat{\rho}(t)$ had a limit $\hat{\rho}_{\ast}$
for $t \rightarrow
\infty$. That is, $\hat{\rho}_{\ast}$ is a weak limit and not a limit in the
(strong) usual sense. In other words $\hat{\rho}_{*}$ is an equilibrium state, in the weak sense, that system reaches in its process of relaxation.

Now it is suitable to make the following remark. In general, if evolution operator $\hat{U_t}$ is not unitary (for example when Hamiltonian is non-Hermitian, see section 6) then we will have that $0<tr(\hat{\rho}_{\ast})<1$, i.e. weak limit $\hat{\rho}_{\ast}\in\mathcal{A}^{\prime}\backslash\mathcal{N}$ is considered an state not normalized. This means that trace of any initial state $\hat{\rho}$ is not preserved for non-unitary evolutions in the limit $t\rightarrow\infty$. This is a typical feature of decoherence in open quantum systems where interaction between a quantum system and its environment produces a non-unitary evolution of  reduced state described by a master equation \cite{zurek-paz}.

A fundamental mathematical element to study the
classical limit of a quantum system is the
\emph{Wigner transform}. From a quantum state $\hat{\rho}$ Wigner transform allows us to obtain a
function $f(\phi)$ defined over phase space $\Gamma$ that can be
interpreted, in the classical limit $\hbar\rightarrow0$, as a distribution probability governed by Liouville
equation of classical statistical mechanics. We present a brief review of some relevant properties of Wigner
transform.

Let $\Gamma=\mathcal{M}_{2(N+1)}\equiv \mathbb{R}^{2(N+1)}$ be the
phase space. Wigner transformation
$symb:\mathcal{A}\rightarrow \mathcal{A}_q$ sends
quantum algebra $\mathcal{A}$ to a ``classical-like"
$\mathcal{A}_q$ algebra by (see \cite{HILLERY, DITO, GADELLA}).

\begin{equation}
symb(\hat{f})=f(\phi)=\int_{\Gamma} \langle q+\Delta|\hat{f}|q-\Delta\rangle
e^{i\frac{p\Delta}{\hbar}}d^{N+1}\Delta
\end{equation}

\nd where $f(\phi)\in \mathcal{A}_q$ are functions defined over
space phase $\Gamma$ where
$\phi=(q^{1},...,q^{N+1},p_{q}^{1},...,p_{q}^{N+1})$.
The star product between two operators
$\hat{f},\widehat{g}\in\mathcal{A}$ is given by (see \cite{HILLERY})

\begin{equation}\label{star product}
symb(\hat{f}.\hat{g})=symb(\hat{f})\ast symb(\hat{g})=(f\ast
g)(\phi)
\end{equation}

\nd and the \emph{Moyal bracket} (see \cite{GADELLA}) is

\begin{equation}\label{moyal}
\{f,g\}_{mb}=\frac{1}{i\hbar}(symb(\hat{f})\ast
symb(\hat{g})-symb(\hat{g})\ast
symb(\hat{f}))=symb(\frac{1}{i\hbar}[\hat{f},\hat{g}])
\end{equation}

\nd Two important properties are, see \cite{HILLERY}
\begin{equation}\label{moyal2}
(f\ast g)(\phi)=f(\phi)g(\phi)+0(\hbar) \,\,\,\,\,\,\, ,
\,\,\,\,\,\,\, \{f,g\}_{mb}=\{f,g\}_{pb}+0(\hbar^{2})
\end{equation}

\nd The \emph{symmetrical} or Weyl ordering prescription is used to
define inverse map \emph{symb}$^{-1}$, that is

\begin{equation}
symb^{-1}[q^{i}(\phi),p^{j}(\phi)]=\frac{1}{2}(\hat{q}^{i}\hat{p}^{j}+\hat{p}^{j}\hat{q}^{i})
\end{equation}

\nd Therefore, \emph{symb} and \emph{symb}$^{-1}$ define
isomorphisms between the algebras $\mathcal{A}$ and
$\mathcal{A}_q$,

\begin{equation}
symb:\mathcal{A}\rightarrow \mathcal{A}_q \,\,\,\,\,\,\, ,
\,\,\,\,\,\,\, symb^{-1}:\mathcal{A}_q\rightarrow
\mathcal{A}
\end{equation}

\nd On the other hand, Wigner transformation for states is

\begin{equation}
\rho(\phi)=(2\pi\hbar)^{-(N+1)}symb(\hat{\rho})
\end{equation}

\nd The fundamental property of Wigner transform used throughout
this paper is the preservation of inner product between states
$\hat{\rho}\in \mathcal{N}$ and observables
$\hat{O}\in\mathcal{A}$. From a physical viewpoint this property correspond to the invariance of mean values calculated
in $\mathcal{A}$ and $\mathcal{A}_q$ respectively. More precisely,

\begin{equation}\label{wigner}
\hat{\rho}(\hat{O})=\langle
\hat{O}\rangle_{\hat{\rho}}=(\hat{\rho}|\hat{O})=(symb(\hat{\rho})|symb(\hat{O}))=\langle\rho(\phi),O(\phi)\rangle=\int_{\Gamma}
d\phi^{2(N+1)}\rho(\phi)O(\phi)
\end{equation}
\nd More generally, if $\hat{A},\hat{B}\in\mathcal{A}$ then we have (see \cite{HILLERY, case})

\begin{equation}\label{wigner2}
tr(\hat{A}\hat{B})=(\hat{A}|\hat{B})=(symb(\hat{A})|symb(\hat{B}))=\langle A(\phi),B(\phi)\rangle=\int_{\Gamma}
d\phi^{2(N+1)}A(\phi)B(\phi)
\end{equation}
\nd where $A(\phi)=(2\pi\hbar)^{-(N+1)}symb(\hat{A})$ and $B(\phi)=symb(\hat{B})$.

\nd At this point it is suitable to make the following remarks and clarifications.
\begin{enumerate}

\item[$(I)$] In this paper we use $\hat{\rho}(n)$ as a short notation of $\hat{\rho}$ after $n$ successive applications of $\hat{U}$, i.e. $\hat{\rho}(n)=\hat{U}(n) \hat{\rho} \hat{U}(n)^{\dag}$ represents $\hat{\rho}$ after $n$ arbitrary time steps at constant intervals in a discretized time evolution. More precisely, we consider that $\hat{U}(n)$ is the one given by Hamiltonian or Floquet operator \cite{stockmann}. Both cases are of great interest theoretically and experimentally\footnote{For instance, when we have Hamiltonians with periodic time dependences the theoreticians prefer periodically \emph{kicked} systems while the experimentalists prefer \emph{driven} systems (see pag. 138 and 139 of \cite{stockmann}).}. That is, we can choose $\hat{U}$ as evolution operator and make the time step to be equal to $\alpha\in\mathbb{R}$ then we have $\hat{\rho}(n)=(e^{-i\frac{\hat{H}}{\hbar}\alpha n}) \hat{\rho} (e^{-i\frac{\hat{H}^{\dag}}{\hbar}\alpha n})$ where only in the unitary case we will have $\hat{H}^{\dag}=\hat{H}$. Moreover, in applications of oscillating electric (or magnetic) fields we have Hamiltonians of the type $\hat{H}=\hat{H}_0+\hat{V}(t)$ where $\hat{H}_0$ is typically the Hamiltonian of an atom or nucleus with a time-periodic potential $\hat{V}(t)$. In this case the ``natural" choice for $\hat{U}$ is the Floquet operator so that $\hat{\rho}(n)$ represents an stroboscopic observation of the system in  $\hat{\rho}$ at time $t=n\tau$ where $\tau$ is the periodicity of $\hat{V}(t)$ (see section 4.1 of \cite{stockmann}). In section 5 we will take into account that the choice of time steps is arbitrary.

\item[$(II)$] By ``a time evolution operator $\hat{U}$ having a classical evolution automorphism $T$" we will mean that if $\hat{\rho}$ is a quantum state and $\rho=symb(\hat{\rho})$ is its corresponding density then Wigner transform of $\hat{\rho}(1)$ (i.e. $\hat{\rho}(1)$ is $\hat{\rho}$ after an application of $\hat{U}$) is $symb(\hat{\rho}(1))=symb(\hat{U}\hat{\rho}\hat{U}^{\dag})=\rho \circ T$. In other words, Wigner transform connects evolution operator $\hat{U}$ with automorphism $T$. Moreover, it follows that $symb(\hat{\rho}(n))=symb(\hat{U}(n)\hat{\rho}\hat{U}(n)^{\dag})=\rho \circ T^{n}$ for all $n\in \mathbb{N}\cup \{0\}$.
\end{enumerate}

\noindent These remarks will be fundamental for the development of next sections.
\subsection{The Quantum Spectral Decomposition Theorem (QSDT)}

In previous section we presented a framework based on the
classical limit by means of Wigner transform. Now we can write
Spectral Decomposition Theorem (Theorem 7 of section 3) in quantum language. First, we assume
that

\begin{itemize}
\item If $\hat{U}$ represents the evolution\footnote{By remark $(I)$ of previous section we know that $\hat{U}$
 is not necessarily $\hat{U}=e^{-i\frac{\hat{H}}{\hbar}t}$ given by Hamiltonian, i.e. $\hat{U}$ may also be Floquet operator.} of a quantum system then $\hat{U}$
has a corresponding classical evolution automorphism \footnote{Of course, we can make the
natural choice of $T$ as $T:\Gamma\rightarrow \Gamma$ with
$T(\phi=(q,p))=\phi(1)=(q(1),p(1))$, i.e. time step is equal to one and $T$ is the classical
evolution operator determined by Hamilton equations. However, the choice of time steps is arbitrary.} $T$ defined
over $\Gamma$ and $T$ has an associated constrictive Frobenius-Perron
operator $P$.

\item There exists a stationary density $f_{\ast}$, that is,
$Pf_{\ast}=f_{\ast}$.
\end{itemize}

\noindent Under these hypothesis we have the following quantum version of
Spectral Decomposition Theorem.

\begin{theorem}(\emph{The Quantum Spectral Decomposition Theorem}\emph{(QSDT)})\label{QSDT}
Let $\hat{\rho} \in \mathcal{N}$ and let $\hat{O}$ be an observable.
Then there exists pure states
$\hat{\rho}_1,\hat{\rho}_2,...,\hat{\rho}_r$; observables
$\hat{O}_1,\hat{O}_2,...,\hat{O}_r$ ; a permutation
$\alpha:\{1,...,r\}\longrightarrow \{1,...,r\}$ and
$\widetilde{\rho}_0 \in \mathcal{A}^{\prime}$ such that

\begin{equation}\label{QSDT1}
(\hat{\rho}(n)|\hat{O})=\sum_{i=1}^{r}\lambda_{\alpha^{-n}(i)}(\hat{\rho}_i|\hat{O})
+ (\widetilde{\rho}_0(n-1)|\hat{O})
\end{equation}

\nd where\footnote{$\hat{\rho}(n)$ in the sense of the remark $(I)$ of pag. 11.}

\begin{equation}\label{QSDT2}
\begin{split}
&\hat{\rho}(n)=U(n) \hat{\rho} U(n)^{\dag}   \,\,\,\,\,\, and \\
&\\
&\lambda_i(\hat{\rho})= (\hat{\rho}|\hat{O}_i)
\end{split}
\end{equation}

\nd The states $\hat{\rho}_i$ and $\widetilde{\rho}_0$ have the
following properties:

\begin{enumerate}

\item[$(I)$] $\hat{\rho}_i\hat{\rho}_j=0(\hbar)$ for all $i\neq j$ and $\hat{\rho}_i^{2}=\hat{\rho}_i+0(\hbar)$. So that states $\hat{\rho}_i$ are projectors in the classical limit $(\hbar\rightarrow
0)$. Moreover, we have a decomposition of the identity:

\begin{equation}\label{QSDT3}
\hat{1} = \sum_{i}\alpha_i\hat{\rho}_i \,\,\,\,\,\,\,\ with
\,\,\,\,\,\,\,\ \alpha_i\geq0 \,\,\,,\,\,\sum_{i}\alpha_i=1
\end{equation}

\item[$(II)$] For each integer $i$ there exists a unique integer $\alpha (i)$ such that $(\hat{U}\hat{\rho}_i\hat{U}^{\dag}|\hat{O})=(\hat{\rho}_{\alpha (i)}|\hat{O})$. Further $\alpha (i) \neq \alpha (j)$ for $i \neq j$ so operator $\hat{U}$ permutes the states $\hat{\rho}_i$.

\item[$(III)$] $(\widetilde{\rho}_0(n-1)|\hat{O}) \longrightarrow 0$ \,\,\, as \,\,\, $n \longrightarrow \infty$.

\end{enumerate}
\end{theorem}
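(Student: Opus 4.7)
The plan is to reduce the quantum statement to the classical Spectral Decomposition Theorem (version II, Theorem~\ref{SDT}) by passing through the Wigner transform, then to lift the resulting classical decomposition back to the operator algebra via $\mathrm{symb}^{-1}$. First I would use the intertwining identity (22) to rewrite the left-hand side as $(\hat{\rho}(n)|\hat{O}) = \langle \rho_n(\phi), O(\phi)\rangle$, where $\rho_n$ denotes the Wigner function of $\hat{\rho}(n)=\hat{U}_n\hat{\rho}\hat{U}_n^{\dagger}$. Under the standing hypotheses that $\hat{U}$ descends to a classical flow $T$ on $\Gamma$ with constrictive Frobenius--Perron operator $P$ admitting a stationary density $f_{\ast}$, the Moyal--Liouville evolution reduces in the limit $\hbar\to 0$ to the classical Liouville evolution, and hence $\rho_n = P^n\rho + O(\hbar)$.

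Next I would invoke Theorem~\ref{SDT} on $P^n\rho$ to obtain
\begin{equation*}
P^n\rho(\phi) = \sum_{i=1}^{r}\lambda_{\alpha^{-n}(i)}(\rho)\,\bar{1}_{A_i}(\phi) + Q_n\rho(\phi),
\end{equation*}
and would substitute this expansion back into the quantum pairing. Setting $\hat{\rho}_i := \mathrm{symb}^{-1}(\bar{1}_{A_i})$, $\hat{O}_i := \mathrm{symb}^{-1}(k_i)$ (with $k_i$ the $L^{\infty}$ kernels implementing $\lambda_i$ in the classical version~I, Theorem~\ref{SDT2}), and $\widetilde{\rho}_0(n-1) := \mathrm{symb}^{-1}(Q_n\rho)$, the formulas (23) and (24) drop out of the linearity of the Wigner pairing.

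The three properties would then be verified by transferring the corresponding classical facts through $\mathrm{symb}^{-1}$. Property (II) is the image of $Pg_i = g_{\alpha(i)}$ under the transform: it translates into $\hat{U}\hat{\rho}_i\hat{U}^{\dagger}=\hat{\rho}_{\alpha(i)}+O(\hbar)$, which suffices for the stated pairing with any observable in the classical limit. Property (III) follows from $\|P^{n-1}Qf\|\to 0$ together with (22) and the boundedness of $O(\phi)$. For property (I) I would exploit the star-product expansion $f\ast g = fg + O(\hbar)$ together with the disjoint-support property of the $\bar{1}_{A_i}$ to conclude $\hat{\rho}_i\hat{\rho}_j=O(\hbar)$ for $i\neq j$ and $\hat{\rho}_i^{2}=\hat{\rho}_i+O(\hbar)$, while the resolution of the identity (25) is the lift of $\sum_i\mu(A_i)\bar{1}_{A_i}=1_X$ with $\alpha_i=\mu(A_i)\geq 0$ and $\sum_i\alpha_i=\mu(X)=1$. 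The main obstacle is controlling these $O(\hbar)$ errors uniformly in $n$: the classical/quantum correspondence is only asymptotic, and naively iterating $\hat{U}$ risks letting the Moyal corrections accumulate. Handling this cleanly requires either taking the strict $\hbar\to 0$ limit, in which the identifications above become exact, or supplementing the constrictiveness of $P$ with a semiclassical propagation estimate that absorbs the deformation residues into the remainder $\widetilde{\rho}_0$ where, by the classical contractivity of $Q_n$, they decay as $n\to\infty$.
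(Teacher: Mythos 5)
Your proposal follows essentially the same route as the paper: reduce to the classical Spectral Decomposition Theorem (version II) through the Wigner transform, set $\hat{\rho}_i=\mathrm{symb}^{-1}(\overline{1}_{A_i})$, $\hat{O}_i=\mathrm{symb}^{-1}(k_i)$, $\widetilde{\rho}_0=\mathrm{symb}^{-1}(Qf)$, and verify (I)--(III) from the star-product expansion $f\ast g=fg+O(\hbar)$, the permutation relation $P\overline{1}_{A_i}=\overline{1}_{A_{\alpha(i)}}$, and $\|Q_n f\|\rightarrow 0$; the only cosmetic difference is that the paper routes the time dependence through the Koopman operator acting on the observable, i.e. $\langle P^{n}f,g\rangle=\langle f,U^{n}g\rangle=(\hat{\rho}|\hat{O}(n))$, before switching back to $(\hat{\rho}(n)|\hat{O})$, whereas you evolve the state directly. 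Your closing remark about the accumulation of $O(\hbar)$ Moyal corrections over $n$ iterations points to a genuine issue that the paper does not address at all, resolving it only implicitly by working in the strict $\hbar\rightarrow 0$ limit exactly as you suggest.
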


\begin{proof}
We consider that triplet $(X,\Sigma,\mu)$ is the one given by classical mechanics, i.e $X=\Gamma$ is the $2(N+1)$-dimensional phase space, $\Sigma=\mathcal{P}(\Gamma)$ is the power set of $\Gamma$ and $\mu$ is the Lebesgue measure. As is usual, we denote $x\in X$ by $\phi=(q^{1},...,q^{N+1},p_{q}^{1},...,p_{q}^{N+1})\in\Gamma$ and $\mu(dx)$ by $d^{2(N+1)}\phi$. Let $\hat{\rho} \in \mathcal{N}$ and let $\hat{O}$ be an observable.
If we define $f=symb(\hat{\rho})$ and $g=symb(\hat{O})$ then,
multiplying Eq. (9) by $g$ and integrating over all phase space $\Gamma$ we
have

\begin{equation}
\int_{\Gamma}d^{2(N+1)}\phi \ P^{n}f(\phi)g(\phi)=\sum_{i=1}^{r}\lambda_{\alpha^{-n}(i)}(f)\int_{\Gamma}d^{2(N+1)}\phi \ \overline{1}_{A_i}(\phi)g(\phi)
+ \int_{\Gamma}d^{2(N+1)}\phi \ Q_{n}f(\phi)g(\phi)
\end{equation}

\nd Equivalently,

\begin{equation}
\langle P^{n}f,g\rangle =\sum_{i=1}^{r}\lambda_{\alpha^{-n}(i)}(f)
\langle\overline{1}_{A_i},g\rangle + \langle P^{n-1}Qf,g\rangle
\end{equation}

\nd Since $\langle P^{n}f,g\rangle = \langle f,U^{n}g\rangle$ and
$\langle P^{n-1}Qf,g\rangle = \langle Qf,U^{n-1}g\rangle$ (i.e.
Koopman operator $U$ is the dual of Frobenius-Perron
operator $P$ both corresponding to automorphism $T$) we obtain\footnote{Throughout the demonstration we consider that $T$ is an arbitrary automorphism and not necessarily the classical evolution $T_{t=1}(\phi=(q,p))=\phi(1)=(q(1),p(1))$ given by the Hamiltonian equations.}
\begin{equation}
\langle f,U^{n}g\rangle =\sum_{i=1}^{r}\lambda_{\alpha^{-n}(i)}(f)
\langle\overline{1}_{A_i},g\rangle + \langle Qf,U^{n-1}g\rangle
\end{equation}

\nd Now if we call $\hat{\rho}_i=symb^{-1}(\overline{1}_{A_i})$,
$\widetilde{\rho}_0=symb^{-1}(Qf)$ and use that
$U^{n}g=g\circ T^n=g(n)=symb(\hat{O}(n))$, $U^{n-1}g=g\circ T^{n-1}=g(n-1)=symb(\hat{O}(n-1))$
(see Eq. \eqref{koopman}) then

\begin{equation}\label{symb1}
\langle symb(\hat{\rho}),symb(\hat{O}(n))\rangle
=\sum_{i=1}^{r}\lambda_{\alpha^{-n}(i)}(f) \langle
symb(\hat{\rho}_i),symb(\hat{O})\rangle + \langle
symb(\widetilde{\rho}_0),symb(\hat{O}(n-1))\rangle
\end{equation}

\nd If we call $k_i=symb(\hat{O}_i)$ and use Eq.
\eqref{lambda} then coefficient $\lambda_{\alpha^{-n}(i)}(f)$
can be written as

\begin{equation}
\lambda_{\alpha^{-n}(i)}(f)=\int_{\Gamma}
d^{2(N+1)}\phi \ f(\phi)k_{\alpha^{-n}(i)}(\phi)=\langle
f,k_{\alpha^{-n}(i)}\rangle=\langle
symb(\hat{\rho}),symb(\hat{O}_{\alpha^{-n}(i)})\rangle=\lambda_{\alpha^{-n}(i)}(\hat{\rho})
\end{equation}

\nd Therefore, Eq. \eqref{symb1} reads

\begin{equation}\label{symb2}
\langle symb(\hat{\rho}),
symb(\hat{O}(n))\rangle=\sum_{i=1}^{r}\lambda_{\alpha^{-n}(i)}(\hat{\rho})\langle
symb(\hat{\rho}_i), symb(\hat{O})\rangle + \langle
symb(\widetilde{\rho}_0), symb(\hat{O}(n-1))\rangle
\end{equation}

\nd Finally, we can use the property of Wigner transform given by Eq. \eqref{wigner}
\begin{equation}\label{WEYL}
\forall \hat{O}\in \mathcal{A},\forall\hat{\rho}\in
\mathcal{A}^{\prime}:(\hat{\rho}|\hat{O})=\langle
symb(\hat{\rho}), symb(\hat{O})\rangle=\int_{\Gamma}d^{2(N+1)}\phi\ \rho(\phi)O(\phi)
\end{equation}

\nd Using this property Eq. \eqref{symb2} can be expressed
in quantum language as

\begin{equation}\label{symb3}
(\hat{\rho}|\hat{O}(n))=\sum_{i=1}^{r}\lambda_{\alpha^{-n}(i)}(\hat{\rho})(\hat{\rho}_i|\hat{O})
+ (\widetilde{\rho}_0|\hat{O}(n-1))
\end{equation}

\nd We know that $(\hat{\rho}|\hat{O}(n))$ and
$(\widetilde{\rho}_0|\hat{O}(n-1))$ are equal to
$(\hat{\rho}(n)|\hat{O})$ and $(\widetilde{\rho}_0(n-1)|\hat{O})$
respectively, and then Eq. \eqref{symb3} reads as

\begin{equation}
(\hat{\rho}(n)|\hat{O})=\sum_{i=1}^{r}\lambda_{\alpha^{-n}(i)}(\hat{\rho})(\hat{\rho}_i|\hat{O})
+ (\widetilde{\rho}_0(n-1)|\hat{O})
\end{equation}

\nd Hence we have proved Eq. \eqref{QSDT1}.

(I): We have

\begin{equation}
\begin{split}
&tr(\hat{\rho}_i)=(\hat{\rho}_i|\hat{1})=\langle symb(\hat{\rho}_i), symb(\hat{1})\rangle=\langle \overline{1}_{A_i},1_{X}\rangle=\int_{\Gamma}d^{2(N+1)}\phi\ \overline{1}_{A_i}(\phi)=\int_{\Gamma}d^{2(N+1)}\phi\ [1/\mu(A_i)]1_{A_i}(\phi)\\
&=[1/\mu(A_i)]\int_{\Gamma}d^{2(N+1)}\phi\ 1_{A_i}(\phi)=[1/\mu(A_i)]\int_{A_i}d^{2(N+1)}\phi=[1/\mu(A_i)]\mu(A_i)=1
\end{split}
\end{equation}

\nd where we have used definition of $\overline{1}_{A_i}$
given by Eq. \eqref{characteristic1}, the Weyl
symbol property (see Eq. \eqref{WEYL}) and that $symb(\hat{1})=1_X$. Thus
$tr(\hat{\rho}_i)=1$, i.e. $\hat{\rho}_i \in \mathcal{N}$ for
all $i$. On the other hand if we apply Eq. \eqref{moyal2}
to $f=\overline{1}_{A_i}$ and
$g=\overline{1}_{A_j}$ we have

\begin{equation}\label{symb4}
\begin{split}
& symb(\hat{\rho}_i\hat{\rho}_j)=\overline{1}_{A_i}(\phi)\overline{1}_{A_j}(\phi)+0(\hbar)=\\
&=0(\hbar) \,\,\,\ if \,\ i\neq j\\
&=\overline{1}_{A_i}(\phi)+0(\hbar)\,\,\,\ if \,\ i=j
\end{split}
\end{equation}

\nd Now applying $symb^{-1}$ to both sides of Eq. \eqref{symb4} we obtain

\begin{equation}
\begin{split}
& \hat{\rho}_i\hat{\rho}_j=0(\hbar) \,\,\,\ if \,\ i\neq j\\
&=\hat{\rho}_i+0(\hbar)\,\,\,\ if \,\ i=j
\end{split}
\end{equation}
On the other hand from Eqns.
\eqref{characteristic1} and \eqref{characteristic1bis} we have
$1_X=\sum_i \mu(A_i)\overline{1}_{A_i}$ and therefore
$symb^{-1}(1_X)=symb^{-1}(\sum_i \mu(A_i)\overline{1}_{A_i})=\sum_i
\mu(A_i)symb^{-1}(\overline{1}_{A_i})$ where we have used that
$symb^{-1}$ is a linear map.
Now if we call $\alpha_i=\mu(A_i)$
since $symb^{-1}(1_X)=\hat{1}$ and
$\hat{\rho}_i=symb^{-1}(\overline{1}_{A_i})$ then we obtain Eq. \eqref{QSDT3}.

(II): Due to part (II) of Theorem \ref{SDT2} and taking into account that we are working under hypothesis
of Theorem \ref{SDT} we have

\begin{equation}\label{characteristic2}
P\overline{1}_{A_i}=\overline{1}_{A_\alpha(i)}
\end{equation}

\nd where $\alpha:\{1,...,r\}\longrightarrow \{1,...,r\}$ is a
permutation which satisfies $\alpha (i) \neq \alpha (j)$ for $i \neq
j$ and thus operator $P$ permutes the functions
$\overline{1}_{A_i}$. Let $\hat{O}$ be an observable and
$g=symb(\hat{O})$. Then from Eq. \eqref{characteristic2} we have

\begin{equation}\label{characteristic3}
\langle P\overline{1}_{A_i},g
\rangle=\langle\overline{1}_{A_\alpha(i)},g\rangle
\end{equation}

\nd and noting that

\begin{equation}
\begin{split}
& \langle P\overline{1}_{A_i},g \rangle=\langle \overline{1}_{A_i},Ug \rangle=\langle symb(\hat{\rho}_i),symb(\hat{O}(1)) \rangle=(\hat{\rho}_i|\hat{O}(1))=(\hat{U}\hat{\rho}_i\hat{U}^{\dag}|\hat{O})\\
& \langle\overline{1}_{A_\alpha(i)},g\rangle=\langle
symb(\hat{\rho}_{\alpha(i)}),symb(\hat{O})
\rangle=(\hat{\rho}_{\alpha(i)}|\hat{O})
\end{split}
\end{equation}

\nd then from Eq. \eqref{characteristic2} we have that
$(\hat{U}\rho_i\hat{U}^{\dag}|\hat{O})=(\hat{\rho}_{\alpha(i)}|\hat{O})$.

(III): Let $\hat{O}$ be an observable and $\varepsilon > 0$. Then by condition (III) of Theorem \ref{SDT2}
we have

\begin{equation}
\|P^{n-1}Qf\|=\|Q_nf\|< \frac{\varepsilon}{max\{|O(\phi)|:\phi \in
\Gamma\}}=\frac{\varepsilon}{\|O\|_{\infty}}
\,\,\,\,\,\,\,\,\,\,\ with \,\,\ O=symb(\hat{O})
\end{equation}

\nd Then

\begin{equation}\label{symb5}
(\widetilde{\rho}_0(n-1)|\hat{O})=\langle
symb(\widetilde{\rho}_0(n-1)), symb(\hat{O})\rangle=\langle Q_n f,
O\rangle\leq \|Q_nf\|\|O\|_{\infty}<\varepsilon
\end{equation}

\nd Therefore, from Eq. \eqref{symb5} it follows that
$(\widetilde{\rho}_0(n-1)|\hat{O}) \longrightarrow 0$.

\end{proof}

\subsection{Quantum Ergodic Hierarchy Levels: Ergodic, Mixing and Exact}
Theorems 7 and 8 can be used simultaneously to determine ergodicity, mixing or exactness just by looking at the terms in the sums on the right hand side of Eq. \eqref{SDT1 decomposition} or Eq. \eqref{SDT2 decomposition}. This observation holds also for the Quantum
Spectral Decomposition Theorem (QSDT, Theorem 9) simply because this is a translation to
the quantum language of its original version for dynamical systems. Therefore,
we can apply theorem 8 and QSDT to obtain the following theorem
in order to characterize the Quantum
Ergodic Hierarchy levels (see \cite{0} theorems 1,2 pags. 261, 263).

\begin{theorem}(\emph{Quantum Ergodic Hierarchy levels: Ergodic, Exact and Mixing})\label{QSDT AND HE}
Let $S$ be a quantum system. Let $\hat{\rho} \in \mathcal{N}$ and
let $\hat{O}$ be an observable. If $\hat{U}$ is a time evolution operator (i.e. representing a discretized time evolution in the sense of remark $(I)$ of section 4) whose classical analogue $T$ (i.e. representing the time evolution of the analogue classical system in the sense of remark $(II)$ of section 4) has a constrictive Markov operator $P$. Then

\begin{enumerate}

\item[$(I)$] $\hat{U}$ is \emph{ergodic} $\Longleftrightarrow$ permutation $\{\alpha(1),...,\alpha(r)\}$ of sequence $\{1,...,r\}$ is \emph{cyclical} (that is, for which there no is invariant subset).

\item[$(II)$] If $r=1$ in representation of Eq. \eqref{QSDT1} $\Longrightarrow$ $\hat{U}$ is \emph{exact}.

\item[$(III)$] If $\hat{U}$ is \emph{mixing} \emph{(see \cite{0} theorem 1 pag. 261)} $\Longrightarrow$ $r=1$ in representation of Eq. \eqref{QSDT1}.

\end{enumerate}

\nd Again, as in theorem 8 we have $\hat{U}$ exact $\Longleftrightarrow$ $\hat{U}$ mixing $\Longleftrightarrow$ $r=1$.

\end{theorem}

\begin{proof}
It is enough to use Theorems 8 and 9 simultaneously.
\end{proof}

\nd In following subsections we examine the QEH levels
established by Theorem 10 and its consequences in more detail. Since Theorem
10 does not distinguish between mixing and exact we will only discuss the mixing case (see remark 5.5.1 of \cite{LM}).

\subsubsection{A consequence of QSDT: Homogenization of the mixing level}

Consider a quantum system that is \emph{mixing}. Then from Theorem 10 and Eq.
\eqref{QSDT1} it follows that

\begin{equation}\label{QSDT consequence mixing}
(\hat{\rho}(n)|\hat{O})=(\hat{\rho}|\hat{O}_1)(\hat{\rho}_1|\hat{O})
+ (\widetilde{\rho}_0(n-1)|\hat{O})
\end{equation}

\nd where $\hat{\rho}_1$ is a pure state (in the classical limit) and
$\hat{O}_1$ is an observable which does not depend on the observable
$\hat{O}$. Further, since quantum system is mixing then it has a weak
limit $\hat{\rho}_{\ast}$ such that $lim_{n\rightarrow
\infty}(\hat{\rho}(n)|\hat{O})=(\hat{\rho}_{\ast}|\hat{O})$. From
this limit and Eq. \eqref{QSDT consequence mixing} we
have

\begin{equation}\label{QSDT consequence mixing2}
(\hat{\rho}_{\ast}|\hat{O})=lim_{n\rightarrow
\infty}(\hat{\rho}|\hat{O}_1)(\hat{\rho}_1|\hat{O}) +
lim_{n\rightarrow
\infty}(\widetilde{\rho}_0(n-1)|\hat{O})=(\hat{\rho}|\hat{O}_1)(\hat{\rho}_1|\hat{O})
\end{equation}

\nd since $lim_{n\rightarrow \infty}(\widetilde{\rho}_0(n-1)|\hat{O})=0$
(see (III) of Theorem 9). Now, if we make $\hat{O}=\hat{1}$ in Eq. \eqref{QSDT consequence mixing2} given that
$(\hat{\rho}_{\ast}|\hat{1})=tr(\hat{\rho}_{\ast})=1$ and
$(\hat{\rho}_1|\hat{1})=tr(\hat{\rho}_1)=1$ we obtain
$(\hat{\rho}|\hat{O}_1)=1$ then
$(\hat{\rho}_{\ast}|\hat{O})=(\hat{\rho}_1|\hat{O})$ for all
$\hat{O}$ observable. Therefore, $\hat{\rho}_1=\hat{\rho}_{\ast}$.
In other words, mixing level is physically responsible for the \emph{homogenization} of $\hat{\rho}$ and its
evolution towards the weak limit $\hat{\rho}_1$ which is a pure state. In this sense we can say that
QSDT gives a physical interpretation of the mixing level.

\subsubsection{The ergodic level: Oscillation plus a term going to zero}

From Theorem 10 we can obtain a necessary and sufficient condition
for ergodicity. A quantum system is \emph{ergodic} if and only if
permutation $\alpha$ of

\begin{equation}\label{QSDT consequence ergodic1}
(\hat{\rho}(n)|\hat{O})=\sum_{i=1}^{r}(\hat{\rho}|\hat{O}_{\alpha^{-n}(i)})(\hat{\rho}_i|\hat{O})
+ (\widetilde{\rho}_0(n-1)|\hat{O})
\end{equation}
is cyclical and $(\widetilde{\rho}_0(n-1)|\hat{O})\rightarrow 0$.
Since $\alpha$ is cyclical there is an integer $N>0$ such that
$\alpha^{N}(i)=i$ and $\alpha^{-N}(i)=i$ for $i=1,...,r$. That is,
the inverse permutation $\alpha^{-1}$ operates on the indices
$i=1,...,r$ as

\begin{equation}\label{QSDT consequence ergodic2}
(1,...,r)\longrightarrow^{\alpha^{-1}}(2,3,...,r-2,r-1,r,1)\longrightarrow^{\alpha^{-2}}(3,4,...,r-1,r,1,2)...\longrightarrow^{\alpha^{-k}}...\longrightarrow^{\alpha^{-N}}(1,...,r)
\end{equation}
After $N$ successive steps we are back to the original cycle
$(1,...,r)$. This behavior indicates that sum of Eq. \eqref{QSDT
consequence ergodic1} will also return to its original value after
$N$ successive time instants. Then sum of Eq. \eqref{QSDT
consequence ergodic1} is periodic with a period equal to $N$, i.e.
with the same period as cycle ${\alpha^{-1}}$. More precisely, we have

\begin{equation}\label{QSDT consequence ergodic3}
\begin{split}
&(\hat{\rho}(0)|\hat{O})=(\hat{\rho}|\hat{O}_{1})(\hat{\rho}_1|\hat{O})+(\hat{\rho}|\hat{O}_{2})(\hat{\rho}_2|\hat{O})+...+(\hat{\rho}|\hat{O}_{r-1})(\hat{\rho}_{r-1}|\hat{O})+(\hat{\rho}|\hat{O}_r)(\hat{\rho}_{r}|\hat{O})+(\widetilde{\rho}_0(-1)|\hat{O}) \\
&(\hat{\rho}(1)|\hat{O})=(\hat{\rho}|\hat{O}_{2})(\hat{\rho}_1|\hat{O})+(\hat{\rho}|\hat{O}_{3})(\hat{\rho}_2|\hat{O})+...+(\hat{\rho}|\hat{O}_{r})(\hat{\rho}_{r-1}|\hat{O})+(\hat{\rho}|\hat{O}_1)(\hat{\rho}_{r}|\hat{O})+(\widetilde{\rho}_0(0)|\hat{O}) \\
&(\hat{\rho}(2)|\hat{O})=(\hat{\rho}|\hat{O}_{3})(\hat{\rho}_1|\hat{O})+(\hat{\rho}|\hat{O}_{4})(\hat{\rho}_2|\hat{O})+...+(\hat{\rho}|\hat{O}_{1})(\hat{\rho}_{r-1}|\hat{O})+(\hat{\rho}|\hat{O}_2)(\hat{\rho}_{r}|\hat{O})+(\widetilde{\rho}_0(1)|\hat{O}) \\
&.\\
&.\\
&.\\
&(\hat{\rho}(N)|\hat{O})=(\hat{\rho}|\hat{O}_{1})(\hat{\rho}_1|\hat{O})+(\hat{\rho}|\hat{O}_{2})(\hat{\rho}_2|\hat{O})+...+(\hat{\rho}|\hat{O}_{r-1})(\hat{\rho}_{r-1}|\hat{O})+(\hat{\rho}|\hat{O}_r)(\hat{\rho}_{r}|\hat{O})+(\widetilde{\rho}_0(N-1)|\hat{O}) \\
&(\hat{\rho}(N+1)|\hat{O})=(\hat{\rho}|\hat{O}_{2})(\hat{\rho}_1|\hat{O})+(\hat{\rho}|\hat{O}_{3})(\hat{\rho}_2|\hat{O})+...+(\hat{\rho}|\hat{O}_{r})(\hat{\rho}_{r-1}|\hat{O})+(\hat{\rho}|\hat{O}_1)(\hat{\rho}_{r}|\hat{O})+(\widetilde{\rho}_0(N)|\hat{O}) \\
\end{split}
\end{equation}
Then
$\sum_{i=1}^{r}(\hat{\rho}|\hat{O}_{\alpha^{-(N+1)}(i)})(\hat{\rho}_i|\hat{O})=\sum_{i=1}^{r}(\hat{\rho}|\hat{O}_{\alpha^{-1}(i)})(\hat{\rho}_i|\hat{O})$.
That is, if we call $F(n)=$

$\sum_{i=1}^{r}(\hat{\rho}|\hat{O}_{\alpha^{-n}(i)})(\hat{\rho}_i|\hat{O})$
then
\begin{equation}
F(N+1)=F(1)
\end{equation}
Thus sum $F(n)$ is periodic with a period equal to
$N$. Moreover, permutation $\alpha$ is cyclical then $N=r$. Therefore, we see that in the ergodic case the mean values
are composed by an oscillating part plus a term which tends to zero for
large times ($n$ goes to $\infty$). From this remark we can obtain a sufficient and necessary condition for ergodicity.
Let $S$ be a quantum system having a constrictive markovian Frobenius-Perron operator $P$ associated with the temporal evolution $T$ of its classical analogue. Then we have

\begin{itemize}
\item[$(\bullet)$] $S$ is ergodic $\Longleftrightarrow$ for all discretized time evolution\footnote{Again, here we are considering that $\hat{U}(n)$ is the one given by the Hamiltonian or at most given by the Floquet operator. That is, $\hat{U}(n)=e^{-i\frac{\hat{H}}{\hbar}\alpha n}$ (Hamiltonian evolution) or
    $\hat{U}(n)=[\hat{U}(\tau)]^n$ (Floquet evolution, see Eq. 4.1.14 of \cite{stockmann}).} (with discrete time steps at constants intervals) any mean value $(\hat{\rho}(n)|\hat{O})$ has only two terms, one of which is an oscillatory function of $n$ (with a period $N$ equals to the number of terms $r$ of the QSDT decomposition given by the Eq. \eqref{QSDT1}) and the other goes to zero for $n\rightarrow\infty$.
\end{itemize}

\noindent In next section the physical interpretation of condition $(\bullet)$
is studied which is the key to the characterization of the ergodic
systems spectrum.

\section{QSDT and the quantum spectrum}

In addition to the characterization of mean values form of quantum ergodic hierarchy
QSDT provides also a link between QEH levels and spectrum. In this section we extend this
relationship to discrete case and continuous case, and when both types of spectrum are present.
We point out the formalism we use was introduced by Antoniou et al in order to a give a rigorous mathematical description on the algebraic formalism of quantum
mechanics \cite{antoniou}.

Moreover, although condition $(\bullet)$ of previous section is a general result it is of great interest to examine the particular case  $\hat{U}(n)=e^{-i\frac{\hat{H}}{\hbar}\alpha n}$ (i.e. the evolution operator is given by the Hamiltonian) where $\alpha$ is a real parameter which defines the discrete time steps. In this section we study what conditions for ergodicity and mixing can be obtained under this assumption.

\subsection{Discrete spectrum}

Let $\hat{\rho}\in\mathcal{N}$ be a state and let $\hat{O}$ be an
observable. For simplicity, we assume the spectrum is finite and discrete. Also, we assume real eigenvalues. Let $E_1,
E_2,..., E_N$ be the energies of the system with their corresponding frequencies
$\omega_1=\frac{E_1}{\hbar}, \omega_2=\frac{E_2}{\hbar},...,\omega_N=\frac{E_N}{\hbar}$. Let $|1\rangle, |2\rangle,...,|N\rangle$ be their corresponding eigenvectors. We express any initial state $\hat{\rho}$ in the basis $\{|i\rangle\}_{i=1}^{N}$ as
\begin{equation}\label{DISCRETE SPECTRUM10}
\hat{\rho}=\sum_{i=1}^{N}\rho_i|i\rangle\langle i|+\sum_{j\neq j^{\prime}}^{N}\rho_{j j^{\prime}}|j\rangle\langle j^{\prime}|
\end{equation}
\noindent Then considering $\hat{U}(n)=e^{-i\frac{\hat{H}}{\hbar}\alpha n}$ the mean value of
$\hat{O}$ in $\hat{\rho}$ after $n$ time steps is

\begin{equation}\label{DISCRETE SPECTRUM1}
(\hat{\rho}(n)|\hat{O})=\sum_{i=1}^{N}\rho_i O_i+\sum_{j\neq j^{\prime}}^{N}\rho_{j j^{\prime}}^{\ast}O_{j j^{\prime}}e^{-i(\omega_j-\omega_j^{\prime})\alpha n}
\end{equation}

\noindent On the other hand, by
QSDT we have

\begin{equation}\label{DISCRETE SPECTRUM2}
(\hat{\rho}(n)|\hat{O})=\sum_{i=1}^{r}\lambda_{\alpha^{-n}(i)}(\hat{\rho}_i|\hat{O})
+ (\widetilde{\rho}_0(n-1)|\hat{O})
\end{equation}

\noindent Then, combining Eqns. \eqref{DISCRETE SPECTRUM1} and \eqref{DISCRETE SPECTRUM2} we obtain

\begin{equation}\label{DISCRETE SPECTRUM3}
\sum_{i=1}^{N}\rho_i O_i+\sum_{j\neq j^{\prime}}^{N}\rho_{j j^{\prime}}^{\ast}O_{j j^{\prime}}e^{-i(\omega_j-\omega_j^{\prime})\alpha n}=\sum_{i=1}^{r}\lambda_{\alpha^{-n}(i)}(\hat{\rho}_i|\hat{O})
+ (\widetilde{\rho}_0(n-1)|\hat{O})
\end{equation}

\noindent From Eq. \eqref{DISCRETE SPECTRUM3} we can make the following considerations.

\noindent (A): If we suppose the system is ergodic then $\sum_{i=1}^{r}\lambda_{\alpha^{-n}(i)}(\hat{\rho}_i|\hat{O})$
is a periodic function and because $(\widetilde{\rho}_0(n-1)|\hat{O})\rightarrow 0$ for $n\rightarrow
\infty$ it follows that $\sum_{i=1}^{r}\lambda_{\alpha^{-n}(i)}(\hat{\rho}_i|\hat{O})
+ (\widetilde{\rho}_0(n-1)|\hat{O})$ is a quasi-periodic function of $n$\footnote{If we add a function $g(n)$ going to zero for $n\rightarrow \infty$ to a periodic function $f(n)$ then the sum $f(n)+g(n)$ is a quasi-periodic function.}. The left hand side of Eq. \eqref{DISCRETE SPECTRUM3} contains the term $\sum_{i=1}^{N}\rho_i O_i$ that is constant and therefore the sum $\sum_{j\neq j^{\prime}}^{N}\rho_{j j^{\prime}}^{\ast}O_{j j^{\prime}}e^{-i(\omega_j-\omega_j^{\prime})\alpha n}$ is a quasi-periodic function of $n$.

\noindent (B): Now suppose $\omega_1=\frac{E_1}{\hbar}, \omega_2=\frac{E_2}{\hbar},...,\omega_N=\frac{E_N}{\hbar}$ are rationally related. This means that there exists $k_{ij},l_{ij}\in\mathbb{N}$ such that
\begin{equation}\label{DISCRETE SPECTRUM RATIONAL}
\frac{\omega_i}{\omega_j}=\frac{k_{ij}}{l_{ij}} \,\,\,\,\,\,\,\,\, \forall i,j=1,...,N
\end{equation}
\noindent Now if we put $\omega_0=min\{\omega_1,...,\omega_N\}$ then we have\footnote{From Eq. \eqref{DISCRETE SPECTRUM RATIONAL} it follows that  $\frac{\omega_i}{\omega_0}=\frac{k_{i0}}{l_{i0}}$ for all $j=1,...,N$. Now if make $n_i=k_{i0}$ and $m_i=l_{i0}$ then we obtain the Eq. \eqref{DISCRETE SPECTRUM RATIONAL2}.}
\begin{equation}\label{DISCRETE SPECTRUM RATIONAL2}
\omega_j=\frac{n_j}{m_j}\omega_0 \,\,\,\,\,\,\,\,\, \forall j=1,...,N \,\ and \,\  n_{j},m_{j}\in\mathbb{N}
\end{equation}
\noindent On the other hand we know that each term\footnote{C.C denotes the complex conjugate.} $\rho_{ij}^{\ast}O_{ij}e^{-i(\omega_i-\omega_j)\alpha t}+C.C$ of non-diagonal part of $(\hat{\rho}(n)|\hat{O})$ (see Eq. \eqref{DISCRETE SPECTRUM1}) has a period $T_{ij}=\frac{2\pi}{\alpha(\omega_i-\omega_j)}$. The number $T_{ij}$ is not necessarily an integer and it depends on the values of $\omega_i,\omega_j,\alpha$. Moreover, using Eq. \eqref{DISCRETE SPECTRUM RATIONAL2} we can write $T_{ij}$ as

\begin{equation}\label{DISCRETE SPECTRUM RATIONAL3}
T_{ij}=\frac{2\pi \ m_i m_j}{\alpha(n_i-n_j)\omega_0} \,\,\,\,\,\,\,\,\, \forall i,j=1,...,N
\end{equation}
\noindent Now the trick is to choose a parameter $\alpha$ such that non-diagonal part of $(\hat{\rho}(n)|\hat{O})$ is periodic. Consider the lowest common multiple of products $m_i m_j$ and the greatest common divisor of all differences $n_i-n_j$ denoted by $LCM\{m_i m_j\}$ and $GCD\{n_im_j-n_jm_i\}$ respectively. If we choose $\alpha=\frac{2\pi}{GCD\{n_im_j-n_jm_i\}\omega_0}$ then it is not difficult to see that
$\sum_{i\neq j}^{N}\rho_{i j}^{\ast}O_{i j}e^{-i(\omega_i-\omega_j)\alpha n}$ is a periodic function of $n$. For this, it is enough to show that each term $\rho_{ij}^{\ast}O_{ij}e^{-i(\omega_i-\omega_j)\alpha n}+C.C$ takes the same value as in $n=0$ (i.e. $\rho_{ij}^{\ast}O_{ij}+C.C$) after $n$ steps where $n=LCM\{m_i m_j\}$. If $n=LCM\{m_i m_j\}$ and $\alpha=\frac{2\pi}{GCD\{n_im_j-n_jm_i\}\omega_0}$ we have

\begin{equation}\label{DISCRETE SPECTRUM RATIONAL4}
\begin{split}
&\rho_{ij}^{\ast}O_{ij}e^{-i(\omega_i-\omega_j)\alpha n}+C.C=\rho_{ij}^{\ast}O_{ij}e^{-i(\frac{n_i}{m_i}-\frac{n_j}{m_j})\omega_0\alpha n}+C.C=
\rho_{ij}^{\ast}O_{ij}e^{-i(\frac{n_im_j-n_jm_i}{m_i m_j})\omega_0\alpha n}+C.C=\\
&\rho_{ij}^{\ast}O_{ij}e^{-i(\frac{n_im_j-n_jm_i}{m_i m_j})\omega_0\frac{2\pi}{GCD\{n_im_j-n_jm_i\}\omega_0} LCM\{m_i m_j\}}+C.C=\\
&\rho_{ij}^{\ast}O_{ij}e^{-i2\pi(\frac{n_im_j-n_jm_i}{GCD\{n_im_j-n_jm_i\}}\frac{LCM\{m_i m_j\}}{m_i m_j})}+C.C
\end{split}
\end{equation}
\noindent Since $n_im_j-n_jm_i$ is divisible by $GCD\{n_im_j-n_jm_i\}$ and $LCM\{m_i m_j\}$ is divisible by $m_i m_j$ then there exists $K_{ij},L_{ij}\in\mathbb{Z}$ such that
\begin{equation}\label{DISCRETE SPECTRUM RATIONAL5}
\begin{split}
&K_{ij}=\frac{n_im_j-n_jm_i}{GCD\{n_im_j-n_jm_i\}}\\
&L_{ij}=\frac{LCM\{m_i m_j\}}{m_i m_j}
\end{split}
\end{equation}
\noindent Thus combining Eqns. \eqref{DISCRETE SPECTRUM RATIONAL4} and \eqref{DISCRETE SPECTRUM RATIONAL5} we have
\begin{equation}\label{DISCRETE SPECTRUM RATIONAL6}
\begin{split}
&\rho_{ij}^{\ast}O_{ij}e^{-i(\omega_i-\omega_j)\alpha n}+C.C=\rho_{ij}^{\ast}O_{ij}e^{-i2\pi K_{ij}L_{ij}}+C.C=
\rho_{ij}^{\ast}O_{ij}+C.C\\
\end{split}
\end{equation}

\noindent Therefore $\sum_{j\neq j^{\prime}}^{N}\rho_{j j^{\prime}}^{\ast}O_{j j^{\prime}}e^{-i(\omega_j-\omega_j^{\prime})n}$ is a periodic function of $n$. Now if we assume the system is ergodic then $\sum_{i=1}^{r}\lambda_{\alpha^{-n}(i)}(\hat{\rho}_i|\hat{O})$ is a periodic function of $n$ (see section 4.2.2.). Then we have

\begin{equation}\label{DISCRETE SPECTRUM4}
\sum_{i=1}^{N}\rho_i O_i+\sum_{j\neq j^{\prime}}^{N}\rho_{j j^{\prime}}^{\ast}O_{j j^{\prime}}e^{-i(\omega_j-\omega_j^{\prime})n}-\sum_{i=1}^{r}\lambda_{\alpha^{-n}(i)}(\hat{\rho}_i|\hat{O})=(\widetilde{\rho}_0(n-1)|\hat{O})
\end{equation}
is a non-trivial periodic function of $n$ going to zero for $n\longrightarrow \infty$. Contradiction. Then $S$ is not ergodic as we assumed.

Summing up,
given a quantum system $S$ of finite discrete spectrum with a constrictive and markovian Frobenius-Perron operator $P$ associated with the temporal evolution of its classical analogue and considering $\hat{U}(n)=e^{-i\frac{\hat{H}}{\hbar}\alpha n}$, from the considerations (A) and (B) we obtain the following conditions for ergodicity.

\begin{itemize}
\item[$(\star)$] $S$ is ergodic $\Longrightarrow$ quantum mean values $(\hat{\rho}(n)|\hat{O})$ are quasi-periodic functions of $n$ for all state $\hat{\rho}$ and observable $\hat{O}$.
\item[$(\star\star)$] If frequencies $\omega_1=\frac{E_1}{\hbar}, \omega_2=\frac{E_2}{\hbar},...,\omega_N=\frac{E_N}{\hbar}$ are rationally related $\Longrightarrow$ $S$ is not ergodic.

\end{itemize}

\noindent It is interesting to note that if we associate frequencies $\omega_1=\frac{E_1}{\hbar}, \omega_2=\frac{E_2}{\hbar},...,\omega_N=\frac{E_N}{\hbar}$ with angular velocities $\omega_1, \omega_2,...,\omega_N$ of $N$ independent and autonomous oscillators then
the condition $(\star\star)$ is the same that the one obtained for
\emph{rotation on the torus} (see pag. 190-193 of \cite{LM}). A possible explanation for this fact is as follows. QSDT allows to express mean values $(\hat{\rho}(n)|\hat{O})$ of a quantum system $S$ such that classical properties like ergodicity, mixing can be translated into quantum language in the same manner as the Spectral Decomposition Theorem (SDT), i.e. looking the number of terms and the periodicity of SDT (QSDT) decomposition (see Eqns. \eqref{SDT2 decomposition} and \eqref{QSDT1}). Put in other words, as well as SDT only gives conditions for classical ergodicity (mixing) and does not distinguish between two ergodic (mixing) dynamical systems the same happens with QSDT. If we have two quantum systems $S_1,S_2$ of finite discrete (and real) spectrum QSDT does not distinguish whether any of them is an harmonic oscillator, a particle in a box, etc. But it only says that $S_1$ or $S_2$ are not ergodic if any of them satisfy that $\omega_1=\frac{E_1}{\hbar}, \omega_2=\frac{E_2}{\hbar},...,\omega_N=\frac{E_N}{\hbar}$ are rationally related. Equivalently, condition $(\star\star)$ can be read as

\begin{itemize}
\item[$(\star\star\star)$] $S$ is ergodic $\Longrightarrow$ frequencies $\omega_1=\frac{E_1}{\hbar}, \omega_2=\frac{E_2}{\hbar},...,\omega_N=\frac{E_N}{\hbar}$ are not rationally related.
\end{itemize}
\noindent In the next section we examine the case of discrete complex eigenvalues.

\subsubsection{Discrete complex eigenvalues}
Now we consider that eigenvalues are complex and finite. This is the case of quantum systems described by an effective non-hermitian Hamiltonian $H_{eff}$ that appears in atomic, molecular, nuclear physics and in chemical reactions. Let $\hat{\rho}\in\mathcal{N}$ be a sate and let $\hat{O}$ be an observable. As in previous section, we assume that any state $\hat{\rho}$ after $n$ successive steps is given by $\hat{U}(n)\hat{\rho}\hat{U}^{\dag}(n)$ where $U(n)=e^{-i\frac{H_{eff}}{\hbar}\alpha n}$ and $\alpha\in\mathbb{R}$ defines the time steps. Let $E_1=\omega_1+i\gamma_1,
E_2=\omega_2+i\gamma_2,...,E_N=\omega_N+i\gamma_N$ be the complex eigenvalues of $H_{eff}$. Non-Hermiticity of $H_{eff}$ yields two set of eigenvectors called $\langle \widetilde{1}|,\langle \widetilde{2}|,...,\langle \widetilde{N}|$ left eigenvectors and $|1\rangle, |2\rangle,...,|N\rangle$ right eigenvectors (see Eq. (2) of \cite{kuhl})

\begin{equation}\label{complex discrete1}
H_{eff}|j\rangle=E_j |j\rangle \,\,\,\,\,,\,\,\,\,\ \langle \widetilde{j}|H_{eff}=\langle \widetilde{j}|E_j \,\,\,\,\,\,\,\,\,\,   j=1,...,N
\end{equation}
satisfying conditions
\begin{equation}\label{complex discrete2}
\begin{split}
&\langle \widetilde{j}|k\rangle=\delta_{jk} \,\,\,\,\,\,\,\,\,\,   j,k=1,...,N \,\,\,\,\,\,\,\,\,\, (bi-orthogonality)\\
&\sum_{j=1}^N |j\rangle \langle\widetilde{j}|=\mathbb{I} \,\,\,\,\,\,\,\,\,\, (completeness)
\end{split}
\end{equation}

\nd where $E_j=\omega_j+i\gamma_j$ gives the energy $\omega_j$ and the resonance width $-\gamma_j>0$ of $jth$ resonance (see pag. 3 of \cite{kuhl}). In a more realistic case (see section 6) we can suppose that only first $K$ eigenvalues are real with $1\leq K\leq N$, i.e. $\gamma_j=0$ for all $j=1,...,K$. Again, we express $\hat{\rho}$ in the basis (co--basis) $\{|j\rangle\}_{j=1}^N$ ($\{|\langle\widetilde{k}|\}_{k=1}^N$)

\begin{equation}\label{complex discrete3}
\hat{\rho}=\sum_{i=1}^{N}\rho_i|i\rangle\langle \widetilde{i}|+\sum_{j\neq j^{\prime}}^{N}\rho_{j j^{\prime}}|j\rangle\langle \widetilde{j^{\prime}}|
\end{equation}
\nd From Eq. \eqref{complex discrete3} we have

\begin{equation}\label{complex discrete4}
\begin{split}
&\hat{\rho}(n)=\hat{U}(n)\hat{\rho}\hat{U}^{\dag}(n)=\hat{U}(n)\left(\sum_{i=1}^{N}\rho_i|i\rangle\langle \widetilde{i}|+\sum_{j\neq j^{\prime}}^{N}\rho_{j j^{\prime}}|j\rangle\langle \widetilde{j^{\prime}}|\right)\hat{U}(n)^{\dag}=\\
&\sum_{i=1}^{K}\rho_i|i\rangle\langle \widetilde{i}|+\sum_{j\neq j^{\prime}}^{K}\rho_{j j^{\prime}}^{\ast}e^{-i\frac{(\omega_j-\omega_j^{\prime})}{\hbar}\alpha n}|j\rangle\langle \widetilde{j^{\prime}}|+\sum_{j=K+1}^{N}\rho_je^{2\alpha n\frac{\gamma_j}{\hbar}}|j\rangle\langle \widetilde{j}|+\\
&\sum_{j\neq j^{\prime},j \,\ or j^{\prime} \in\{K+1,...,N\}}\rho_{j j^{\prime}}^{\ast}e^{-i\frac{(\omega_j-\omega_j^{\prime})}{\hbar}\alpha n}e^{\frac{(\gamma_j+\gamma_j^{\prime})}{\hbar}\alpha n}|j\rangle\langle \widetilde{j^{\prime}}|
\end{split}
\end{equation}
\nd Coefficients of last two sums of Eq. \eqref{complex discrete4} contains probabilities of transitions to eigenstates with complex eigenvalues called ``quasi-stationary states" whose resonance width $-\gamma_j$ is related to the lifetime $\Gamma_j$ by $-\gamma_j\sim\frac{\hbar}{\Gamma_j}$, see pag. 559 of \cite{landau}. Physically, $|K+1\rangle, |K+2\rangle,...,|N\rangle$ represent states of an open quantum system in interaction with an environment where the exponentially decreasing coefficients of last two sums Eq. \eqref{complex discrete4} can be interpreted as probabilities of finding a particle in one of the states $|K+1\rangle, |K+2\rangle,...,|N\rangle$ ``inside the system".

Next step is to show how quasi-stationary states can be interpreted in terms of QSDT. From Eq. \eqref{complex discrete4} the mean value of an observable $\hat{O}$ after $n$ time steps reads
\begin{equation}\label{complex discrete5}
\begin{split}
&(\hat{\rho}(n)|\hat{O})=\sum_{i=1}^{K}\rho_iO_i+\sum_{j\neq j^{\prime}}^{K}\rho_{j j^{\prime}}^{\ast}O_{j j^{\prime}}e^{-i\frac{(\omega_j-\omega_j^{\prime})}{\hbar}\alpha n}+\sum_{j=K+1}^{N}\rho_jO_je^{2\alpha n\frac{\gamma_j}{\hbar}}+\\
&\sum_{j\neq j^{\prime},j \,\ or j^{\prime} \in\{K+1,...,N\}}\rho_{j j^{\prime}}^{\ast}O_{j j^{\prime}}e^{-i\frac{(\omega_j-\omega_j^{\prime})}{\hbar}\alpha n}e^{\frac{(\gamma_j+\gamma_j^{\prime})}{\hbar}\alpha n}
\end{split}
\end{equation}
\nd Therefore, by QSDT we have

\begin{equation}\label{complex discrete6}
\begin{split}
&\sum_{i=1}^{r}\lambda_{\alpha^{-n}(i)}(\hat{\rho}_i|\hat{O})
+ (\widetilde{\rho}_0(n-1)|\hat{O})=\sum_{i=1}^{K}\rho_iO_i+\sum_{j\neq j^{\prime}}^{K}\rho_{j j^{\prime}}^{\ast}O_{j j^{\prime}}e^{-i\frac{(\omega_j-\omega_j^{\prime})}{\hbar}\alpha n}+\\
&\sum_{j=K+1}^{N}\rho_jO_je^{2\alpha n\frac{\gamma_j}{\hbar}}+\sum_{j\neq j^{\prime},j \,\ or j^{\prime} \in\{K+1,...,N\}}\rho_{j j^{\prime}}^{\ast}O_{j j^{\prime}}e^{-i\frac{(\omega_j-\omega_j^{\prime})}{\hbar}\alpha n}e^{\frac{(\gamma_j+\gamma_j^{\prime})}{\hbar}\alpha n}
\end{split}
\end{equation}
\nd Since $(\widetilde{\rho}_0(n-1)|\hat{O})\rightarrow0$ for $n\rightarrow\infty$ and last two sums on the right hand of Eq. \eqref{complex discrete6} are the only terms that decay then we have
\begin{equation}\label{complex discrete7}
\begin{split}
&\sum_{i=1}^{r}\lambda_{\alpha^{-n}(i)}(\hat{\rho}_i|\hat{O})=\sum_{i=1}^{K}\rho_iO_i+\sum_{j\neq j^{\prime}}^{K}\rho_{j j^{\prime}}^{\ast}O_{j j^{\prime}}e^{-i\frac{(\omega_j-\omega_j^{\prime})}{\hbar}\alpha n}\\
&(\widetilde{\rho}_0(n-1)|\hat{O})=\sum_{j=K+1}^{N}\rho_jO_je^{2\alpha n\frac{\gamma_j}{\hbar}}+\sum_{j\neq j^{\prime},j \,\ or j^{\prime} \in\{K+1,...,N\}}\rho_{j j^{\prime}}^{\ast}O_{j j^{\prime}}e^{-i\frac{(\omega_j-\omega_j^{\prime})}{\hbar}\alpha n}e^{\frac{(\gamma_j+ \gamma_j^{\prime})}{\hbar}\alpha n}
\end{split}
\end{equation}
\nd From Eq. \eqref{complex discrete7} and Theorem 10 we conclude that the number of ``quasi-stationary" states $|K+1\rangle, |K+2\rangle,...,|N\rangle$
determines chaotic nature of the quantum system. For instance, if we have only one real eigenvalue $E_1$ then we have maximum number of quasi-stationary states $|2\rangle, |3\rangle,...,|N\rangle$. In this case $K=1$ and  $\sum_{i=1}^{r}\lambda_{\alpha^{-n}(i)}(\hat{\rho}_i|\hat{O})$ is equal to $\rho_1O_1$. Then $r=1$, i.e. the system is mixing. When $K=2$ we have two real eigenvalues $E_1,E_2$ and $\sum_{i=1}^{r}\lambda_{\alpha^{-n}(i)}(\hat{\rho}_i|\hat{O})$ is equal to
$\rho_1O_1+\rho_2O_2+\rho_{1 2}^{\ast}O_{1 2}e^{-i\frac{(\omega_1-\omega_2)}{\hbar}\alpha n}+\rho_{21}^{\ast}O_{21}e^{-i\frac{(\omega_2-\omega_1)}{\hbar}\alpha n}$ that (by the arguments of the consideration (B) of the previous section) is an oscillatory function of $n$ if $\omega_1,\omega_2$ are rationally related and taking $\alpha=\frac{2\pi}{GCD\{n_1m_2-n_2m_1\}\omega_0}$, i.e. $\omega_1=\frac{n_1}{m_1}\omega_0$ and $\omega_2=\frac{n_2}{m_2}\omega_0$. Then by condition $(\bullet)$ of section 4.2.2 we have the system is ergodic.

We can see that in the finite spectrum case we have a substantial difference between having real eigenvalues and complex eigenvalues. In the case of real eigenvalues (see condition $(\star\star)$ of section 5.1) we have seen that natural frequencies rationally related imply the system is not ergodic. While in the complex case if the real part of eigenvalues are rationally related then from the first line of Eq. \eqref{complex discrete7} it follows the system is ergodic\footnote{I.e., using the same arguments of (B) of section 5.1.}. Moreover, in the complex case the decay term of QSDT, i.e. $(\widetilde{\rho}_0(n-1)|\hat{O})$, is exponentially decreasing and provided by resonances widths $\gamma_j$ as we can see from the Eq. \eqref{complex discrete7}.

Summing up, we see that QSDT gives a characterization of open quantum systems of discrete complex eigenvalues where exponentially decreasing terms of mean values (the right hand on second line of Eq. \eqref{complex discrete7}) are associated with the decay term of QSDT, $(\widetilde{\rho}_0(n-1)|\hat{O})$. All the analysis described in this section will be illustrated in some detail with examples in section 6.

\subsection{Continuous spectrum}

Now we assume that spectrum is continuous being $E_{\omega}=\hbar \omega\in
[0,\infty)$ the energies, $|\omega\rangle$ the
generalized eigenvectors and $\omega=\frac{E_{\omega}}{\hbar}$ the natural frequencies. Let $\hat{\rho}$ be a state and let $\hat{O}$ be an observable. In order to obtain an approach to equilibrium we restrict the space
of observables and consider only \emph{van Hove Observables} (see \cite{van hove1, van hove2} for a more detail discussion). There is no loss of generality in this restriction since observables not belonging to van Hove space are not experimentally accessible (see
\cite{MARIO OLIMPIA2} for a complete argument). The components of a
van Hove observable $\hat{O}_R$ are
$O_R(\omega,\omega^{\prime})=O(\omega)\delta(\omega-\omega^{\prime})+O(\omega,\omega^{\prime})$ where $O(\omega)\delta(\omega-\omega^{\prime}),O(\omega,\omega^{\prime})$ are \emph{singular} and \emph{regular}\footnote{By \emph{regular} we mean, among other properties, that $f(\omega,\omega^{\prime})^{\ast}O(\omega,\omega^{\prime})\in
L^{1}([0,\infty)\times[0,\infty))$ for all $f(\omega,\omega^{\prime})$.} part of $\hat{O}_R$ respectively .
Then we can expand $\hat{O}$ in the basis
$\{|\omega\rangle\langle\omega|,|\omega\rangle\langle\omega^{\prime}|\}$
as

\begin{equation}\label{CONTINUOUS SPECTRA1}
\hat{O}=\int_{0}^{\infty}O(\omega)|\omega\rangle\langle\omega|d\omega+
\int_{0}^{\infty}\int_{0}^{\infty}O(\omega,\omega^{\prime})|\omega\rangle\langle\omega^{\prime}|
\end{equation}
Therefore, mean value of $\hat{O}$ in $\hat{\rho}$ after $n$ time steps is\footnote{Again, $\hat{\rho}(n)=\hat{U}(n)\hat{\rho}\hat{U}^{\dag}(n)$ with $U(n)=e^{-i\frac{\hat{H}}{\hbar}\alpha n}$ and $\alpha\in\mathbb{R}$.}
\begin{equation}\label{CONTINUOUS SPECTRA3}
\begin{split}
&(\hat{\rho}(n)|\hat{O})=\int_{0}^{\infty}\rho(\omega)^{\ast}O(\omega)d\omega+
\int_{0}^{\infty}\int_{0}^{\infty}\rho(\omega,\omega^{\prime})^{\ast}O(\omega,\omega^{\prime})e^{-i(\omega-\omega^{\prime})\alpha n}d\omega d\omega^{\prime}\\
\end{split}
\end{equation}
On the other hand, by QSDT we have
\begin{equation}\label{CONTINUOUS SPECTRA4}
\begin{split}
&(\hat{\rho}(n)|\hat{O})=\sum_{i=1}^{r}(\hat{\rho}|\hat{O}_{\alpha^{-n}(i)})(\hat{\rho}_i|\hat{O}) + (\widetilde{\rho}_0(n-1)|\hat{O})=\\
&=\int_{0}^{\infty}\rho(\omega)^{\ast}O(\omega)d\omega+
\int_{0}^{\infty}\int_{0}^{\infty}\rho(\omega,\omega^{\prime})^{\ast}O(\omega,\omega^{\prime})e^{-i(\omega-\omega^{\prime})\alpha n}d\omega d\omega^{\prime}\\
\end{split}
\end{equation}
If we assume that
$\rho(\omega,\omega^{\prime})^{\ast}O(\omega,\omega^{\prime})\in
L^{1}([0,\infty)\times[0,\infty))$ then by Riemann-Lebesgue
Lemma we have
\begin{equation}\label{CONTINUOUS SPECTRA5}
\int_{0}^{\infty}\int_{0}^{\infty}\rho(\omega,\omega^{\prime})^{\ast}O(\omega,\omega^{\prime})e^{-i(\omega-\omega^{\prime})\alpha n}d\omega
d\omega^{\prime}\longrightarrow0
\end{equation}
when $n\longrightarrow\infty$. That is, using van Hove observables and assuming that
$\rho(\omega,\omega^{\prime})^{\ast}O(\omega,\omega^{\prime})\in
L^{1}([0,\infty)\times[0,\infty))$ then the system is \emph{mixing}.

Moreover, from Eqns. \eqref{CONTINUOUS SPECTRA3} and \eqref{CONTINUOUS
SPECTRA5} it follows that
$(\hat{\rho}(n)|\hat{O})\longrightarrow(\hat{\rho}_{\ast}|\hat{O})=\int_{0}^{\infty}\rho(\omega)^{\ast}O(\omega)d\omega$
as $n\longrightarrow\infty$ with
$\hat{\rho}_{\ast}=\int_{0}^{\infty}\rho(\omega)^{\ast}|\omega\rangle\langle\omega|d\omega$.

Therefore, $\hat{\rho}$ has weak limit $\hat{\rho}_{\ast}$.
Also, if system is mixing then by Theorem 10 we have $r=1$ in the sum of \eqref{CONTINUOUS SPECTRA4}. Then we have
\begin{equation}\label{CONTINUOUS SPECTRA6}
\begin{split}
&(\hat{\rho}(n)|\hat{O})=(\hat{\rho}|\hat{O}_1)(\hat{\rho}_1|\hat{O}) + (\widetilde{\rho}_0(n-1)|\hat{O})=\\
&=\int_{0}^{\infty}\rho(\omega)^{\ast}O(\omega)d\omega+
\int_{0}^{\infty}\int_{0}^{\infty}\rho(\omega,\omega^{\prime})^{\ast}O(\omega,\omega^{\prime})e^{-i(\omega-\omega^{\prime})\alpha n}d\omega d\omega^{\prime}\\
\end{split}
\end{equation}
Now since $(\hat{\rho}|\hat{O}_1)(\hat{\rho}_1|\hat{O})$ and
$\int_{0}^{\infty}\rho(\omega)^{\ast}O(\omega)d\omega$ are constants
and

$\int_{0}^{\infty}\int_{0}^{\infty}\rho(\omega,\omega^{\prime})^{\ast}O(\omega,\omega^{\prime})e^{-i(\omega-\omega^{\prime})\alpha n}d\omega
d\omega^{\prime}\longrightarrow0,(\widetilde{\rho}_0(n-1)|\hat{O})\longrightarrow0$
then by Eq. \eqref{CONTINUOUS SPECTRA6} we obtain
\begin{equation}\label{CONTINUOUS SPECTRA7}
\begin{split}
&(\hat{\rho}|\hat{O}_1)(\hat{\rho}_1|\hat{O})=\int_{0}^{\infty}\rho(\omega)^{\ast}O(\omega)d\omega\\
&(\widetilde{\rho}_0(n-1)|\hat{O})=
\int_{0}^{\infty}\int_{0}^{\infty}\rho(\omega,\omega^{\prime})^{\ast}O(\omega,\omega^{\prime})e^{-i(\omega-\omega^{\prime})\alpha n}d\omega d\omega^{\prime}\\
\end{split}
\end{equation}
From Eq. \eqref{CONTINUOUS SPECTRA7} we see the physical interpretation
of term $(\widetilde{\rho}_0(n-1)|\hat{O})$: It
is the manifestation of Riemann-Lebesgue Lemma in closed quantum systems with continuous spectrum \cite{MARIO OLIMPIA} where observable space is the van Hove space. Therefore, the fact that mixing systems with continuous
spectrum are those with only one term
$(\hat{\rho}|\hat{O}_1)(\hat{\rho}_1|\hat{O})$ in decomposition given by Eq. \eqref{QSDT1}
is a consequence of QSDT where Riemann-Lebesgue Lemma is contained in
term $(\widetilde{\rho}_0(n-1)|\hat{O})$. From these arguments it follows the following property.

Summing up, given a quantum system $S$ of continuous spectrum with a constrictive markovian Frobenius-Perron operator $P$ associated with temporal evolution of its classical analogue and considering $U(n)=e^{-i\frac{\hat{H}}{\hbar}\alpha n}$ we have

\begin{itemize}
\item[$(\ast)$] $S$ is mixing and the sum of QSDT decomposition (see Eq. \eqref{QSDT1}) only contains the term $(\hat{\rho}|\hat{O}_1)(\hat{\rho}_1|\hat{O})$ which is the constant part of quantum mean value of Eq. \eqref{CONTINUOUS SPECTRA3}, i.e. $\int_{0}^{\infty}\rho(\omega)^{\ast}O(\omega)d\omega$.
The term $(\widetilde{\rho}_0(n-1)|\hat{O})$ is the manifestation of Riemann-Lebesgue Lemma.

\end{itemize}


\subsection{The General Case: Discrete and Continuous Spectrum}

If both types of spectrum are present
according to Eqns. \eqref{DISCRETE SPECTRUM1} and \eqref{CONTINUOUS
SPECTRA3} we have
\begin{equation}\label{GENERAL CASE1}
\begin{split}
&(\hat{\rho}(n)|\hat{O})=\sum_{i=1}^{N}\rho_i O_i+
\sum_{j\neq j^{\prime}}^{N}\rho_{j j^{\prime}}^{\ast}O_{j j^{\prime}}e^{-i(\omega_j-\omega_j^{\prime})\alpha n}+\\
&\int_{0}^{\infty}\rho(\omega)^{\ast}O(\omega)d\omega+
\int_{0}^{\infty}\int_{0}^{\infty}\rho(\omega,\omega^{\prime})^{\ast}O(\omega,\omega^{\prime})e^{-i(\omega-\omega^{\prime})\alpha n}d\omega
d\omega^{\prime}
\end{split}
\end{equation}
where first two sums and last two integrals represent the discrete and continuous contributions to spectrum respectively. By QSDT we have

\begin{equation}\label{GENERAL CASE2}
(\hat{\rho}(n)|\hat{O})=\sum_{i=1}^{r}(\hat{\rho}|\hat{O}_{\alpha^{-n}(i)})(\hat{\rho}_i|\hat{O})+(\widetilde{\rho}_0(n-1)|\hat{O})
\end{equation}
From Eqns. \eqref{GENERAL CASE1} and \eqref{GENERAL CASE2} we obtain
\begin{equation}\label{GENERAL CASE3}
\begin{split}
&\sum_{i=1}^{N}\rho_i O_i+\sum_{j\neq j^{\prime}}^{N}\rho_{j j^{\prime}}^{\ast}O_{j j^{\prime}}e^{-i(\omega_j-\omega_j^{\prime})\alpha n}+\\
&\int_{0}^{\infty}\rho(\omega)^{\ast}O(\omega)d\omega+
\int_{0}^{\infty}\int_{0}^{\infty}\rho(\omega,\omega^{\prime})^{\ast}O(\omega,\omega^{\prime})e^{-i(\omega-\omega^{\prime})\alpha n}d\omega
d\omega^{\prime}=\\
&\sum_{i=1}^{r}(\hat{\rho}|\hat{O}_{\alpha^{-n}(i)})(\hat{\rho}_i|\hat{O})+(\widetilde{\rho}_0(n-1)|\hat{O})
\end{split}
\end{equation}
From Eq. \eqref{GENERAL CASE3} we can analyze the different cases. Let $S$ be a quantum system having both types of spectrum as in Eq. \eqref{GENERAL CASE1}. Suppose $S$ has a constrictive markovian Frobenius-Perron operator $P$ associated with the temporal evolution of its classical analogue. One would be tempted to think that this case is simply the superposition of discrete and continuous cases each separately, but this is not so.

We begin with the ergodic case.
From Theorem 10 and section 4.2.2 we know that $S$ is ergodic if and only if sum on right hand of Eq. \eqref{GENERAL CASE3} is periodic and we see the only term on left hand of Eq. \eqref{GENERAL CASE3} that can be periodic is $\sum_{j\neq j^{\prime}}^{N}\rho_{j j^{\prime}}^{\ast}O_{j j^{\prime}}e^{-i(\omega_j-\omega_j^{\prime})\alpha n}$. Here we see a substantial difference from discrete case where we have $S$ is ergodic $\Longrightarrow$ $\sum_{j\neq j^{\prime}}^{N}\rho_{j j^{\prime}}^{\ast}O_{j j^{\prime}}e^{-i(\omega_j-\omega_j^{\prime})\alpha n}$ is quasi-periodic (see condition $(\star)$ of section 5.1).
Furthermore, the term $(\widetilde{\rho}_0(n-1)|\hat{O})$ which goes to zero should be associated with the only decay term on left hand of Eq. \eqref{GENERAL CASE3}, i.e. the ``Riemann-Lebesgue term" $\int_{0}^{\infty}\int_{0}^{\infty}\rho(\omega,\omega^{\prime})^{\ast}O(\omega,\omega^{\prime})e^{-i(\omega-\omega^{\prime})\alpha n}d\omega
d\omega^{\prime}$. Then we have

\begin{equation}\label{GENERAL CASE4}
\begin{split}
&\sum_{i=1}^{N}\rho_i O_i+\sum_{j\neq j^{\prime}}^{N}\rho_{j j^{\prime}}^{\ast}O_{j j^{\prime}}e^{-i(\omega_j-\omega_j^{\prime})\alpha n}+
\int_{0}^{\infty}\rho(\omega)^{\ast}O(\omega)d\omega=\sum_{i=1}^{r}(\hat{\rho}|\hat{O}_{\alpha^{-n}(i)})(\hat{\rho}_i|\hat{O})\\
&\int_{0}^{\infty}\int_{0}^{\infty}\rho(\omega,\omega^{\prime})^{\ast}O(\omega,\omega^{\prime})e^{-i(\omega-\omega^{\prime})\alpha n}d\omega
d\omega^{\prime}=(\widetilde{\rho}_0(n-1)|\hat{O})
\end{split}
\end{equation}
\nd Therefore we can see that in the ergodic case decay term is only provided by continuous part of spectrum as a consequence of QSDT. By contrast, in the mixing case this situation is very different due to non-degeneration condition of quantum chaos. Let us see why this is so.

In the mixing case we have that $S$ is mixing if and only $r=1$ in the sum on right hand of Eq. \eqref{GENERAL CASE3} and this happens if and only if double sum on left hand of Eq. \eqref{GENERAL CASE3} is constant. In this case there is no oscillatory term or quasiperiodic term and all frequencies $\omega_j$ are equal, e.g. $\omega_j=\omega_0$ for all $j=1,...,N$. Then energy levels are degenerated. But if we recall that non-degeneration is a necessary condition of quantum chaos then this situation should not be physically admissible. Thus the only possibility that $\omega_j$ are non-degenerated without any oscillatory or quasiperiodic term is if double sum $\sum_{j\neq j^{\prime}}^{N}\rho_{j j^{\prime}}^{\ast}O_{j j^{\prime}}e^{-i(\omega_j-\omega_j^{\prime})\alpha n}$ of Eq. \eqref{GENERAL CASE3} can be approximated by a double integral in order to apply Riemann-Lebesgue Lemma. In other words, discrete part must be
quasicontinuous, i.e. adjacent energy levels are very close. More precisely, considering discrete part is supported in an interval $[a,b]$ such that\footnote{Since $[0,\infty)$ is the continuous part of spectrum and we assume non-degeneration.} $[a,b] \cap [0,\infty)=\emptyset$ with $N$ equispaced frequencies
$\omega_j=a+\frac{(j-1)(b-a)}{N-1}$ and $j=1,...,N$ then sums on left hand of Eq. \eqref{GENERAL CASE3} must be replaced by

\begin{equation}\label{GENERAL CASE5}
\begin{split}
&\sum_{i=1}^{N}\rho_i O_i\longrightarrow \sum_{j=1}^{N}\rho(\omega_j)O(\omega_j)\Delta\omega_j
\,\,\,\,\,\,\,\,\,\,\,\,\,\,\,\,\,\,\Delta\omega_j=\frac{b-a}{N-1},\Delta\omega_{j^{\prime}}=\frac{b-a}{N-1}  \\
&\sum_{j\neq j^{\prime}}^{N}\rho_{j j^{\prime}}^{\ast}O_{j j^{\prime}}e^{-i(\omega_j-\omega_j^{\prime})\alpha n}\longrightarrow
\sum_{j,j^{\prime}=1}^{N}\sum_{j\neq j^{\prime}}^{N}\rho(\omega_j,\omega_j^{\prime})^{\ast}
O(\omega_j,\omega_j^{\prime})e^{-i(\omega_j-\omega_j^{\prime})\alpha n}\Delta\omega_j\Delta\omega_{j^{\prime}}\\
\end{split}
\end{equation}

\noindent where $\Delta\omega_j=\frac{b-a}{N-1}$ is the length of segment $[\omega_j,\omega_{j+1}]$ and $\Delta\omega_j\Delta\omega_{j^{\prime}}=(\frac{b-a}{N-1})^2$ is the volume of square $[\omega_j,\omega_{j+1}]\times [\omega_{j^{\prime}},\omega_{j^{\prime}+1}]$. Now since discrete part is quasicontinuous then frequencies $\omega_j$ are very close and we have $\Delta\omega_j=\frac{b-a}{N-1}\ll1$, i.e. we can take limit $N\rightarrow \infty$ in sums of \eqref{GENERAL CASE5}. In such case we can approximate sums of Eq. \eqref{GENERAL CASE5} by integrals and we have the replacements\footnote{It is understood that integrals $\int\int_{[a,b]\times[a,b]}$ and $\int\int_{[a,b]\times[a,b]\backslash\{(\omega,\omega):\omega\in[a,b]\}}$ of Eq. \eqref{GENERAL CASE6} are equal since $\{(\omega,\omega):\omega\in[a,b]\}$ is a zero measure set of plane $\{(\omega,\omega^{\prime})\}=\mathbb{R}_{\geq0}\times \mathbb{R}_{\geq0}$. }

\begin{equation}\label{GENERAL CASE6}
\begin{split}
&\omega_j\longrightarrow \omega \,\,\,\,\,\,\,,\,\,\,\,\,\,\ \Delta\omega_{j}\longrightarrow d\omega
\,\,\,\,\,\,\,,\,\,\,\,\,\,\ \Delta\omega_j\Delta\omega_{j^{\prime}}\longrightarrow d\omega d\omega^{\prime}\\
&\sum_{j=1}^{N}\rho(\omega_j)O(\omega_j)\Delta\omega_j\longrightarrow \int_{[a,b]}\rho(\omega)O(\omega)d\omega \\
&\sum_{j,j^{\prime}=1}^{N}\sum_{j\neq j^{\prime}}^{N}\rho(\omega_j,\omega_j^{\prime})^{\ast}
O(\omega_j,\omega_j^{\prime})e^{-i(\omega_j-\omega_j^{\prime})\alpha n}\Delta\omega_j\Delta\omega_{j^{\prime}}\longrightarrow\\
&\int\int_{[a,b]\times[a,b]\backslash\{(\omega,\omega):\omega\in[a,b]\}}\rho(\omega,\omega^{\prime})^{\ast}O(\omega,\omega^{\prime})e^{-i(\omega-\omega^{\prime})\alpha n}d\omega d\omega^{\prime}=\\
&\int\int_{[a,b]\times[a,b]}\rho(\omega,\omega^{\prime})^{\ast}O(\omega,\omega^{\prime})e^{-i(\omega-\omega^{\prime})\alpha n}d\omega d\omega^{\prime}\\
\end{split}
\end{equation}

\nd Therefore, for the quasicontinuous case Eq. \eqref{GENERAL CASE3} becomes

\begin{equation}\label{GENERAL CASE7}
\begin{split}
&\int_{[a,b]}\rho(\omega)O(\omega)d\omega+\int_{0}^{\infty}\rho(\omega)^{\ast}O(\omega)d\omega=(\hat{\rho}|\hat{O}_{1})(\hat{\rho}_1|\hat{O})\\
&\int\int_{[a,b]\times[a,b]}\rho(\omega,\omega^{\prime})^{\ast}O(\omega,\omega^{\prime})e^{-i(\omega-\omega^{\prime})\alpha n} d\omega d\omega^{\prime}+\\
&\int_{0}^{\infty}\int_{0}^{\infty}\rho(\omega,\omega^{\prime})^{\ast}O(\omega,\omega^{\prime})e^{-i(\omega-\omega^{\prime})\alpha n}d\omega
d\omega^{\prime}=(\widetilde{\rho}_0(n-1)|\hat{O})
\end{split}
\end{equation}
\nd That is,
\begin{equation}\label{GENERAL CASE7}
\begin{split}
&\int_{[a,b]\cup [0,\infty)}\rho(\omega)O(\omega)d\omega=(\hat{\rho}|\hat{O}_{1})(\hat{\rho}_1|\hat{O})\\
&\int\int_{[a,b]\times[a,b]\cup [0,\infty)\times[0,\infty)}\rho(\omega,\omega^{\prime})^{\ast}O(\omega,\omega^{\prime})e^{-i(\omega-\omega^{\prime})\alpha n} d\omega d\omega^{\prime}=(\widetilde{\rho}_0(n-1)|\hat{O})
\end{split}
\end{equation}
\nd Now if we apply Riemann-Lebesgue to double integral of Eq. \eqref{GENERAL CASE7} then this integral can be associated with term $(\widetilde{\rho}_0(n-1)|\hat{O})$. Moreover, we can see how quasicontinuous part $\{\omega_j=a+\frac{(j-1)(b-a)}{N-1}:j=1,...,N\}$ joins with continuous part $\omega\in[0,\infty)$ in the limit $N\rightarrow\infty$ expressed in integrals $\int_{[a,b]\cup [0,\infty)}$ and $\int\int_{[a,b]\times[a,b]\cup [0,\infty)\times[0,\infty)}$ of Eq. \eqref{GENERAL CASE7}.

Summing up, given a quantum system $S$ having both types of spectrum, discrete and continuous, with a constrictive and markovian Frobenius-Perron operator $P$ associated with the temporal evolution of its classical analogue we have

\begin{itemize}
\item $S$ is ergodic $\Longrightarrow$ nondiagonal term of the discrete part of any quantum mean value (i.e. the second term of the right hand of Eq. \eqref{GENERAL CASE1}) is a periodic function of $n$.
\item $S$ is mixing $\Longleftrightarrow$ discrete part is quasicontinuous.

\end{itemize}

\section{QSDT applications}
In this section we apply the QSDT to two examples\footnote{We omit exact case since by QSDT (Theorem 10) we have $\hat{U}$ exact $\Longleftrightarrow$ $\hat{U}$ mixing.} to illustrate its physical relevance: Microwave billiards and a phenomenological Gamow model. As we pointed out in section 4 and 5.1.1, both examples are open quantum systems that can be described by an effective non-Hermitian Hamiltonian.

\subsection{Quantum ergodic and quantum mixing: Microwave billiards}

Microwave billiards are special examples of scattering systems \cite{stockmann, ste95}. Typically, to determine the spectrum of such systems antennas are used as scattering channels. An external coupling determines the resonances positions and widths, and the spectrum of microwave billiards are modified by the presence of the coupling antennas. We can start with an expression for the scattering matrix (see \cite{stockmann} pag. 221)

\begin{equation}\label{MB1}
\hat{S}=\hat{1}-2i\hat{W}^{\dag}\frac{\hat{1}}{E-\hat{H}_0+i\hat{W}\hat{W}^{\dag}}\hat{W}
\end{equation}

\noindent where $\hat{1}$ is the identity matrix, $\hat{H}_0$ is the undisturbed Hamiltonian assumed to be a $N\times N$ truncated matrix and matrix elements $W_{nk}$ of $\hat{W}$ contain information on the coupling strengths of the $k$th channel to the $n$th resonance. The poles of scattering matrix are  eigenvalues of the effective Hamiltonian

\begin{equation}\label{MB2}
\hat{H}=\hat{H}_0-i\hat{W}\hat{W}^{\dag}
\end{equation}

\noindent Information on widths and shifts induced by antennas is completely contained in the eigenvalues of effective hamiltonian $\hat{H}$. Effective Hamiltonians of this type have been extensively used in nuclear physics \cite{Mah69, Lew91, Fyo97, Per96, Sok88, Sto98}. We are only interested in the limiting case of large coupling strengths where $\hat{H}$ is dominated by the term $-i\hat{W}\hat{W}^{\dag}$. The key is to select a basis where this term is diagonal and $\hat{H}_0$ is treated as a perturbation. If we have $K$ channels without loss of generality we may assume that the $K$ vectors $\hat{w}_k$ with components $W_{nk}$ are mutually orthogonal,

\begin{equation}\label{MB3}
\hat{w}_k^{\dag}\hat{w}_l=\sum_n W_{nk}^*W_{nl}=|\hat{w}_k|^{2}\delta_{kl}
\end{equation}

\noindent Then we can take the orthogonal basis formed by the $K$ vectors $\hat{\nu}_k=\frac{\hat{w}_k}{|\hat{w}_k|}$ and the $N-K$ vectors $\hat{u}_{\alpha}$ where $\hat{H}_0$ is diagonal in subspace generated by vectors $\hat{u}_{\alpha}$ (see \cite{stockmann} pag. 221 and 222). Using first order perturbation theory the eigenvalues of $\hat{H}$ are

\begin{equation}\label{MB4}
\begin{split}
&E_l=\hat{\nu}_l^{\dag}\hat{H}_0\hat{\nu}_l-i|\hat{w}_l| \,\,\,\,\,\,\,\,\,\,\,\,\,\,\,\,\,\,\,\,\, l\leq K\\
&E_l=\hat{u}_l^{\dag}\hat{H}_0\hat{u}_l \,\,\,\,\,\,\,\,\,\,\,\,\,\,\,\,\,\,\,\,\,\,\,\,\,\,\,\,\,\,\,\,\,\,\,\,\,\,\,\,\, l> K
\end{split}
\end{equation}
\noindent Now we can consider the Hamiltonian

\begin{equation}\label{MB5}
\hat{\mathcal{H}}=\sum_{l=1}^{K}E_{l}|\nu_l\rangle
\langle\widetilde{\nu_l}|+\sum_{m=K+1}^{N}E_{m}|u_m\rangle
\langle\widetilde{u_m}|
\end{equation}

\noindent which is a first order approximation of effective Hamiltonian $\hat{H}$. We have renamed $\hat{\nu}_l$, $\hat{u}_m$ as $|\nu_l\rangle$, $|u_m\rangle$ respectively, and Hamiltonian $\hat{\mathcal{H}}$ has been expressed in two sums to analyze different cases according to the number of channels $K$. Also, left vectors $\langle\widetilde{\nu_l}|,\langle\widetilde{u_m}|$ are those defined by Eqns. \eqref{complex discrete1} and \eqref{complex discrete2}. We consider an initial state $\hat{\rho}$ given by
\begin{equation}\label{MB6}
\hat{\rho}=\sum_{l=1}^{K}\sum_{l^{\prime}=1}^{K}\rho_{ll^{\prime}}|\nu_l\rangle
\langle\widetilde{\nu_{l^{\prime}}}|+\sum_{m=K+1}^{N}\sum_{m^{\prime}=K+1}^{N}\rho_{mm^{\prime}}|u_m\rangle
\langle\widetilde{u_{m^{\prime}}}|+\left\{\sum_{\sigma=1}^{K}\sum_{\lambda=K+1}^{N}\rho_{\sigma\lambda}|\nu_{\sigma}\rangle
\langle\widetilde{u_{\lambda}}|+h.c.\right\}
\end{equation}
\noindent where $\rho_{ij}=\langle i|\hat{\rho}|j\rangle$ for all $i,j=1,...,N$ and first two sums of Eq. \eqref{MB6} are diagonal blocks corresponding to the subspaces spanned by $\{|\nu_l\rangle\}_{l=1}^K$ and $\{|u_m\rangle\}_{m=K+1}^N$ respectively. Third term of Eq. \eqref{MB6} contains nondiagonal elements of $\hat{\rho}$ which connect subspaces spanned by $\{|\nu_l\rangle\}_{l=1}^K$, $\{|u_m\rangle\}_{m=K+1}^N$ and $h.c.$ denotes the hermitian conjugate operation. We rename $\hat{\nu}_l^{\dag}\hat{H}_0\hat{\nu}_l$ and $\hat{u}_m^{\dag}\hat{H}_0\hat{u}_m$ as $\gamma_l$ and $\omega_m$ for all $l,m$. Again, operator $\hat{U}(n)$ is given by $\hat{U}(n)=e^{-i\frac{\hat{\mathcal{H}}}{\hbar}\alpha n}$ so $\hat{\rho}$ after $n$ steps is

\begin{equation}\label{MB7}
\begin{split}
&\hat{\rho}(n)=\hat{U}(n)\hat{\rho}\hat{U}(n)^{\dag}=e^{-i\frac{\hat{\mathcal{H}}}{\hbar}\alpha n}\hat{\rho}e^{i\frac{\hat{\mathcal{H}^{\dag}}}{\hbar}\alpha n}
=\sum_{l=1}^{K}\sum_{l^{\prime}=1}^{K}\rho_{ll^{\prime}}e^{-\frac{(|\hat{w}_l|+|\hat{w}_{l^{\prime}}|)\alpha n}{\hbar}}e^{\frac{-i(\gamma_l-\gamma_{l^{\prime}})\alpha n}{\hbar}}|\nu_l\rangle  \langle\widetilde{\nu_{l^{\prime}}}|\\
&+\sum_{m=K+1}^{N}\sum_{m^{\prime}=K+1}^{N}\rho_{mm^{\prime}}e^{\frac{-i(\omega_m-\omega_{m^{\prime}})\alpha n}{\hbar}}|u_m\rangle
\langle\widetilde{u_{m^{\prime}}}|+\left\{\sum_{\sigma=1}^{K}\sum_{\lambda=K+1}^{N}\rho_{\sigma\lambda}e^{-\frac{|\hat{w}_{\sigma}|\alpha n}{\hbar}}e^{\frac{-i(\gamma_{\sigma}-\omega_{\lambda})\alpha n}{\hbar}}|\nu_{\sigma}\rangle
\langle\widetilde{u_{\lambda}}|+h.c.\right\}
\end{split}
\end{equation}

\noindent Then the mean value of an observable $\hat{O}$ in $\hat{\rho}$ after $n$ steps is

\begin{equation}\label{MB8}
\begin{split}
&(\hat{\rho}(n)|\hat{O})=tr(\hat{\rho}(n)\hat{O})=\sum_{l=1}^{K}\sum_{l^{\prime}=1}^{K}\rho_{ll^{\prime}}O_{ll^{\prime}}
e^{-\frac{(|\hat{w}_l|+|\hat{w}_{l^{\prime}}|)\alpha n}{\hbar}}e^{\frac{-i(\gamma_l-\gamma_{l^{\prime}})\alpha n}{\hbar}}\\
&+\sum_{m=K+1}^{N}\sum_{m^{\prime}=K+1}^{N}\rho_{mm^{\prime}}O_{mm^{\prime}}
e^{\frac{-i(\omega_m-\omega_{m^{\prime}})\alpha n}{\hbar}}+\left\{\sum_{\sigma=1}^{K}\sum_{\lambda=K+1}^{N}\rho_{\sigma\lambda}O_{\sigma\lambda}
e^{-\frac{|\hat{w}_{\sigma}|\alpha n}{\hbar}}e^{\frac{-i(\gamma_{\sigma}-\omega_{\lambda})\alpha n}{\hbar}}+h.c.\right\}
\end{split}
\end{equation}
\noindent where $O_{ij}=\langle i|\hat{O}|j\rangle$ for all $i,j=1,...,N$. From Eq. \eqref{MB8} we see that first and third sums decay exponentially as $n\rightarrow\infty$ while second sum oscillates according to frequency differences $\omega_m-\omega_{m^{\prime}}$. This remark is crucial in order to analyze different cases according to number of channels $K$. By QSDT we have

\begin{equation}\label{MB9}
(\hat{\rho}(n)|\hat{O})=\sum_{i=1}^{r}(\hat{\rho}|\hat{O}_{\alpha^{-n}(i)})(\hat{\rho}_i|\hat{O})+(\widetilde{\rho}_0(n-1)|\hat{O})
\end{equation}
\noindent Considering the previous remark and comparing Eqns. \eqref{MB8} and \eqref{MB9} we have

\begin{equation}\label{MB10}
\begin{split}
&\sum_{i=1}^{r}(\hat{\rho}|\hat{O}_{\alpha^{-n}(i)})(\hat{\rho}_i|\hat{O})=
\sum_{m=K+1}^{N}\sum_{m^{\prime}=K+1}^{N}\rho_{mm^{\prime}}O_{mm^{\prime}}e^{\frac{-i(\omega_m-\omega_{m^{\prime}})\alpha n}{\hbar}}\\
&(\widetilde{\rho}_0(n-1)|\hat{O})=\sum_{l=1}^{K}\sum_{l^{\prime}=1}^{K}\rho_{ll^{\prime}}O_{ll^{\prime}}
e^{-\frac{(|\hat{w}_l|+|\hat{w}_{l^{\prime}}|)\alpha n}{\hbar}}e^{\frac{-i(\gamma_l-\gamma_{l^{\prime}})\alpha n}{\hbar}}
+\left\{\sum_{\sigma=1}^{K}\sum_{\lambda=K+1}^{N}\rho_{\sigma\lambda}O_{\sigma\lambda}
e^{-\frac{|\hat{w}_{\sigma}|\alpha n}{\hbar}}e^{\frac{-i(\gamma_{\sigma}-\omega_{\lambda})\alpha n}{\hbar}}+h.c.\right\}
\end{split}
\end{equation}

\noindent First line of Eq. \eqref{MB10} shows the oscillatory part of $(\hat{\rho}(n)|\hat{O})$ while second line of Eq. \eqref{MB10} expresses the decay terms of $(\hat{\rho}(n)|\hat{O})$. This can be considered as a ``global" QSDT characterization of first order Hamiltonian spectrum of microwave billiards.

Going into more detail, we consider a rectangle microwave billiard whose quantum mean values (at first order) are given by Eq. \eqref{MB10}. In this case unperturbed Hamiltonian $\hat{H}_0$ has frequencies $\frac{\gamma_l}{\hbar}=\frac{\hat{\nu}_l^{\dag}\hat{H}_0\hat{\nu}_l}{\hbar}$, $\frac{\omega_m}{\hbar}=\frac{\hat{u}_m^{\dag}\hat{H}_0\hat{u}_m}{\hbar}$ which can be considered rationally related\footnote{This assumption is reasonable since dimensions of billiard can be adjusted such that frequencies are rationally related. Indeed, since only frequencies $\frac{\omega_m}{\hbar}$ are related with term $\sum_{i=1}^{r}(\hat{\rho}|\hat{O}_{\alpha^{-n}(i)})(\hat{\rho}_i|\hat{O})$ (see first line of Eq. \eqref{MB10}) then it is enough to consider that frequencies $\frac{\omega_m}{\hbar}$ are rationally related.  } for all $l=1,...,K$; $m=K+1,...,N$ then by arguments of consideration (B) of section 5.1 we have double sum of the first line of Eq. \eqref{MB10} is a periodic function of $n$. Now varying number of channels $K$ we can obtain different chaotic transitions from integrable regime to chaotic one.

We begin with $K=0$ that corresponds to integrable case $\hat{W}=0$. Then $|\hat{w}_l|=0$ for all $l=1,...,K$ in Eq. $\eqref{MB4}$ and quantum mean values have no terms going to zero, e.g. all terms of $(\hat{\rho}(n)|\hat{O})$ in Eq. \eqref{MB8} are oscillatory. Since there is no term that tends to zero then QSDT does not apply in this case.

Case $1\leq K < N-1$. In this case $\hat{W}\neq0$ and we have an exponential decay of $(\widetilde{\rho}_0(n-1)|\hat{O})$ with characteristic times $\tau_{ll^{\prime}}=\frac{\hbar}{|\hat{w}_l|+|\hat{w}_{l^{\prime}}|}$, $\tau_{\sigma}=\frac{\hbar}{|\hat{w}_{\sigma}|}$ for all $l,l^{\prime},\sigma=1,...,K$. Since frequencies $\omega_l$ are rationally related for all $m=K+1,...,N$ then right hand of first line of Eq. \eqref{MB10} is a periodic function of $n$. Therefore, from QSDT we have that microwave billiard is \emph{ergodic} for $1\leq K < N-1$ (see end of section 5.1). This case corresponds to pseudointegrable regime ($K=1$) and chaotic regime ($K>1$).

Case $K=N-1$. When $K=N-1$ we have $(\widetilde{\rho}_0(n-1)|\hat{O})$ decays exponentially with characteristic times $\tau_{ll^{\prime}}=\frac{\hbar}{|\hat{w}_l|+|\hat{w}_{l^{\prime}}|}$, $\tau_{\sigma}=\frac{\hbar}{|\hat{w}_{\sigma}|}$ for all $l,l^{\prime},\sigma=1,...,N-1$ and right hand of the first line of Eq. \eqref{MB10} has only one term, $\rho_{NN}O_{NN}$. Then by QSDT it follows that $r=1$ and this means that microwave billiard is \emph{mixing}. Therefore, microwave billiard is \emph{mixing} for $K=N-1$. This case corresponds to a fully chaotic regime with maximum number of terms that decay exponentially.


The case $K=N$ is physically excluded because all quantum mean values $(\hat{\rho}(n)|\hat{O})$ can not tend to zero.

Therefore, application of QSDT to microwave billiards says that increasing of coupling channel number $K$ can be interpreted as chaotic transitions from integrable ($K=0$) to ergodic ($1<K\leq N-1$) and from ergodic to mixing ($K=N-1$).

\subsection{Quantum mixing case: A phenomenological Gamow model}
Phenomenological Gamow model type \cite{Omnes1, Omnes2} is perhaps one of simplest and more
illustrative examples of decoherence and approach to equilibrium of a quantum system. Our system is a
single quantum oscillator embed in a environment composed of a large bath of noninteracting quantum oscillators which can be considered a continuum. Degeneracies of this system render it convenient for application of Hamiltonian analytic extension (see \cite{Omnes1, Gadella, letterpolos, Ordonito-dec}) in order to obtain a non-Hermitian effective Hamiltonian \footnote{In fact, in this non-Hermitian case we have $\hat{H}=\sum_{k=0}^{\infty}z_{k}|k\rangle
\langle\widetilde{k}|\neq \sum_{m=0}^{\infty}z_{m}^{\ast}|m\rangle
\langle\widetilde{m}|=\hat{H}^{\dag}$ due presence of complex eigenvalues $z_k$ ($k\neq0$) with nonzero imaginary parts.} given by

\begin{equation}\label{QSDT mixing1}
\hat{H}=\sum_{k=0}^{\infty}z_{k}|k\rangle
\langle\widetilde{k}|
\end{equation}
where $z_{k}=k(\omega_0-i\gamma_0)$ are complex eigenvalues (except $z_0=\omega_0$), $k=0,1,2,...$ Natural frequency of single oscillator is $\omega_0$ and $\gamma_0$ is associated with relaxation time $t_R$ by $t_R=\frac{\hbar}{\gamma_0}$ (see \cite{Omnes2} pag. 288). In other words, $\gamma_0$ is inversely proportional to decay rate of dumping of single oscillator. We consider an initial state $\hat{\rho}$ given by
\begin{equation}\label{QSDT mixing2}
\hat{\rho}=\sum_{k=0}^{\infty}\sum_{m=0}^{\infty}\rho_{km}|k\rangle
\langle\widetilde{m}|
\end{equation}
where $\rho_{km}=\langle k|\widehat{\rho}|m\rangle$ and
$\rho_{kk}\geq0$\,,\,$\rho_{mk}=\rho_{km}^{\ast}$ with $k,m \in \mathbb{N}_0$.
State $\hat{\rho}$ after $n$ time steps is given by
\begin{equation}\label{QSDT mixing3}
\begin{split}
&\hat{\rho}(n)=\rho_{00}|0\rangle\langle 0|+\sum_{k=1}^{\infty}\rho_{kk}e^{-2k\frac{\gamma_0}{\hbar}n}|k\rangle\langle k|+\sum_{k=0}^{\infty}\sum_{m=0, k\neq m}^{\infty}\rho_{km} e^{-\frac{\gamma_0}{\hbar}(k+m)n}|k\rangle\langle m|
\end{split}
\end{equation}
Then mean value of an observable $\hat{O}$ in $\hat{\rho}$ after $n$ time steps is
\begin{equation}\label{QSDT mixing4}
\begin{split}
&(\hat{\rho}(n)|\hat{O})=tr(\hat{\rho}(n)\hat{O})=\sum_{k=0}^{\infty}\langle\widetilde{k}|\hat{\rho}(n)\hat{O}|k\rangle=
\sum_{k=0}^{\infty}\{\hat{\rho}(n)\hat{O}\}_{kk}=\sum_{k=0}^{\infty}\sum_{m=0}^{\infty}\{\hat{\rho}(n)\}_{km}\{\hat{O}\}_{mk}=\\
&=\rho_{00}O_{00}+\sum_{k=1}^{\infty}\rho_{kk}O_{kk}e^{-2k\frac{\gamma_0}{\hbar}n}+\sum_{k=0}^{\infty}\sum_{m=0, k\neq m}^{\infty}\rho_{km}O_{km} e^{-\frac{\gamma_0}{\hbar}(k+m)n}
\end{split}
\end{equation}
where $\langle n|\hat{O}|m\rangle=O_{nm}$ with $k,m \in \mathbb{N}_0$ and parenthesis with subindexes  $\{\hat{\rho}(n)\hat{O}\}_{kk},\{\hat{\rho}(n)\}_{km},\{\hat{O}\}_{mk}$ are notations for corresponding matrix elements of $\hat{\rho}(n)\hat{O}, \hat{\rho}(n)$ and $\hat{O}$ respectively. On the other hand by, QSDT we have

\begin{equation}\label{QSDT mixing5}
\begin{split}
&\sum_{i=1}^{r}(\hat{\rho}_{\psi}|\hat{O}_{\alpha^{-n}(i)})(\hat{\rho}_i|\hat{O}) + (\widetilde{\rho}_0(n-1)|\hat{O})=\\
&=\rho_{00}O_{00}+\sum_{k=1}^{\infty}\rho_{kk}O_{kk}e^{-2k\frac{\gamma_0}{\hbar}n}+\sum_{k=0}^{\infty}\sum_{m=0, k\neq m}^{\infty}\rho_{km}O_{km} e^{-\frac{\gamma_0}{\hbar}(k+m)n}
\end{split}
\end{equation}

\nd Since first term $\rho_{00}O_{00}$ on right hand of Eq. \eqref{QSDT mixing5} is constant and remaining terms tends to zero when $n\rightarrow \infty$ (due presence of decreasing exponentials) then sum on left hand of Eq. \eqref{QSDT mixing5} must consists of only one term, i.e. $r=1$. From this fact it follows that $\widetilde{\rho}_0(n-1)=\hat{\rho}(n)-\rho_{00}|0\rangle\langle 0|$ is associated with decay modes $z_k$ ($k\neq0$).
Therefore, from Theorem 10 (II, III) it follows that Gamow model is \emph{mixing}. Moreover, we can obtain weak
limit of state $\hat{\rho}$ (see \cite{0} section 6.3 def. B). From Eq. \eqref{QSDT mixing5} we have

\begin{equation}\label{QSDT mixing6}
\begin{split}
&(\hat{\rho}(n)|\hat{O})=\sum_{i=1}^{r}(\hat{\rho}_{\psi}|\hat{O}_{\alpha^{-n}(i)})(\hat{\rho}_i|\hat{O}) + (\widetilde{\rho}_0(n-1)|\hat{O})
\longrightarrow\rho_{00}O_{00}=\rho_{00}(\hat{\rho}_{|0\rangle\langle 0|}|\hat{O}) \,\,\,\ for \,\,\,\ n\longrightarrow \infty
\end{split}
\end{equation}

\nd where $\hat{\rho}_{|0\rangle\langle 0|}=|0\rangle\langle 0|$. That is, $\hat{\rho}_{w}=\rho_{00}|0\rangle\langle 0|$ is the weak limit of $\hat{\rho}$.\footnote{Non-normalization of $\hat{\rho}_{w}=\rho_{00}|0\rangle\langle 0|$ is a consequence of non-Hermiticity of Hamiltonian $\hat{H}$ given by Eq. \eqref{QSDT mixing1}. This is so because if $\hat{O}=\hat{I}$ from Eq. \eqref{QSDT mixing4} we have
$tr(\rho(n))$ is decreasing and $tr(\rho(n))=(\rho(n)|\hat{I})\rightarrow\rho_{00}$ when $n\rightarrow\infty$. Therefore, $0<\rho_{00}\leq1$ and $\rho_{00}=1$ $\Longleftrightarrow$ $\rho_{kk}=0$ for all $k\neq0$.}

\section{Conclusions}
Assuming that the classical limit of a quantum system
has a constrictive Markovian Frobenius-Perron operator associated with its classical evolution $T$ we presented a quantum version of Spectral Decomposition Theorem of Dynamical
Systems (Theorems 6 and 7) we called Quantum Spectral Decomposition Theorem (QSDT, Theorem 9). QSDT gives a
representation of expectation values of all observable $\hat{O}$ characterizing ergodic level of QEH by presence of an oscillatory term (see Eq. \eqref{QSDT consequence ergodic3}) and at the same time contains Riemann-Lebesgue Lemma for van Hove
observables in the mixing case (see Eq.
\eqref{CONTINUOUS SPECTRA7}). Moreover, in the mixing case QSDT
provides a physical interpretation of ``homogenization"
(see Eqns. \eqref{QSDT consequence mixing} and \eqref{QSDT
consequence mixing2}) which is represented by a pure state $\hat{\rho}_1$ that is weak limit of $\hat{\rho}$.

Considering that quantum evolution is given by $\hat{U}(n)=e^{-i\frac{\hat{H}}{\hbar}\alpha n}$, i.e. a discretized evolution at constants intervals where the parameter $\alpha\in\mathbb{R}$ defines the time step, QSDT links QEH levels with spectrum. More precisely, when the spectrum is discrete we have that quasi-periodicity of quantum mean values is a necessary condition for ergodicity and linear dependence in the ring of integers of energy frequencies $\omega_1=\frac{E_1}{\hbar}, \omega_2=\frac{E_2}{\hbar},...,\omega_N=\frac{E_N}{\hbar}$ is a sufficient condition for non-ergodicity (see condition $(\star\star)$ of section 5.1).

For the complex discrete case QSDT gives us a characterization of quasi-stationary states (see Eq. \eqref{complex discrete7}) where exponentially decreasing terms of any mean value $(\hat{\rho}(n)|\hat{O})$ are associated with decay term $(\widetilde{\rho}_0(n-1)|\hat{O})$. This can be considered as a QSDT characterization of open quantum systems described by an effective and non-Hermitian Hamiltonian.

For the continuous case we have that mean values of \emph{van Hove observables} are composed by two terms, diagonal and non-diagonal, whose physical interpretation can be analyzed by means of QSDT decomposition. The terms $\int_{0}^{\infty}\rho(\omega)^{\ast}O(\omega)d\omega$ and $\int_{0}^{\infty}\int_{0}^{\infty}\rho(\omega,\omega^{\prime})^{\ast}O(\omega,\omega^{\prime})e^{-i\frac{(\omega-\omega^{\prime})}{\hbar}n}d\omega d\omega^{\prime}$ can be identified with $(\hat{\rho}|\hat{O}_1)(\hat{\rho}_1|\hat{O})$ and $(\widetilde{\rho}_0(n-1)|\hat{O})$ respectively, see Eq. \eqref{CONTINUOUS SPECTRA7}. In other words, mixing case corresponds to continuous spectrum with a van Hove algebra.
When both spectra are present QSDT give us two conditions (see section 5.3).
\begin{itemize}
\item Quantum system is ergodic then non-diagonal term of discrete part of spectrum is a periodic function of $n$, see Eq. \eqref{GENERAL CASE4}.
\item Quantum system is mixing if and only if discrete part of spectrum can be approximated by a quasicontinuous, see Eqns. \eqref{GENERAL CASE5} and \eqref{GENERAL CASE6}.
\end{itemize}
As we pointed out in section 5.3, the general case is not the superposition of discrete and continuous cases simultaneously. In the ergodic case periodic term is provided by discrete part of spectrum while decay term is contributed by continuous part of spectrum. On the other hand, in the mixing case both continuous part and discrete part provide decay terms associated with $(\widetilde{\rho}_0(n-1)|\hat{O})$ where discrete part is necessarily quasi-continuous. Roughly speaking, we can say that continuous part ``forces" to discrete part to be periodic in the ergodic case and to be quasi-continuous in the mixing case.

In Section 6 we apply QSDT to two examples, microwave billiards and a phenomenological Gamow type model. In the first case QSDT allows us to characterize chaotic transitions, from integrable regime to chaotic one, of microwave billiards in terms of the number of channels $K$. When there is no channels ($K=0$) mean values $(\widehat{\rho}(n)|\widehat{O})$ are oscillatory which corresponds to integrable regime. For a number of channels $K$ such that $1\leq K< N-1$ \footnote{$N$ is the dimension of truncated matrix of undisturbed Hamiltonian $\widehat{H}_0.$} QSDT says that microwave billiard is ergodic which corresponds to pseudointegrable regime ($K=1$) and chaotic regime ($K>1$). This results due presence of terms that decay exponentially with a rate inversely proportional to the coupling strength $|\hat{w}_{\sigma}|$ for all $\sigma=1,...,K$ (see Eq. \eqref{MB10}). This seems physically reasonable since a large coupling strength $|\hat{w}_{\sigma}|\gg1$ implies a high scattering channel which corresponds to a small characteristic time $\tau_{\sigma}=\frac{\hbar}{|\hat{w}_{\sigma}|}\ll1$, i.e. the exponential decay is fast. When $K=N-1$ QSDT says that microwave billiard is mixing which corresponds to fully chaotic regime where mean values $(\widehat{\rho}(n)|\widehat{O})$ contain a maximum number of terms that decay exponentially (see Eq. \eqref{MB10}). We summarize QSDT characterization of  microwave billiards in the following table.
\vskip1truecm
\centerline{TABLE I: QSDT characterization of microwave billiards\footnote{Again, we consider dimensions of billiard can be adjusted such that unperturbed Hamiltonian $\hat{H}_0$ has frequencies $\frac{\gamma_l}{\hbar}=\frac{\hat{\nu}_l^{\dag}\hat{H}_0\hat{\nu}_l}{\hbar}$, $\frac{\omega_m}{\hbar}=\frac{\hat{u}_m^{\dag}\hat{H}_0\hat{u}_m}{\hbar}$ which are rationally related.}}\vskip0.5truecm

\begin{tabular}{||l | c | r||}
\hline
\hline
&              &            \\
Number of channels $K$ & Quantum mean values $(\hat{\rho}(n)|\hat{O})$  & \,\,\,\,\,\, QEH level, regime \,\,\,\,\,\,\,\,\,\,\,\, \\
&              &            \\
\hline
&              &            \\
\,\,\,\,\,\,\,\,\,\,\,\,\,\,\,\, $K=0$  &     oscillatory          &     none, integrable  \,\,\,\,\,\,\,\,\,\,\,\,     \\
&              &            \\
\hline
&              &            \\
\,\,\,\,\,\,\,\, $1\leq K< N-1$  &     exponential decay terms         &    $\mathbf{ergodic}$, chaotic   \,\,\,\,\,\,\,\,\,\,\,\,    \\
&              &            \\
\hline
&              &            \\
  &     maximum number of           &    $\mathbf{mixing}$, \,\,\,\,\,\,\,\,\,\,\,\,\,\,\,\,\,\,\,\,\,\,\,\,\,\,     \\
\,\,\,\,\,\,\,\,\,\,\,\, $K=N-1$ &    exponential terms          &   fully chaotic  \,\,\,\,\,\,\,\,\,\,\,\,\,\,\,\,\,\,        \\
&              &            \\
\hline
\hline
\end{tabular}

\vspace{0.5cm}

\noindent From Table I we see that increasing of number of channels implies an increasing of chaotic level. System starts with an integrable regime for $K=0$, enters to a chaotic ergodic regime when $1\leq K< N-1$ and finally reaches a fully chaotic mixing regime for $K=N-1$, i.e. as number of antennas increases system becomes more chaotic. This can be considered as a QSDT characterization of microwave billiards in the limit of large coupling strengths.

In the case of Gamow model QSDT decomposition determines its mixing level, see Eq. \eqref{QSDT mixing5}. Moreover, given an initial state $\hat{\rho}$ its weak limit $\hat{\rho}_w=\rho_{00}|0\rangle\langle 0|$ can not be normalized due non-Hermiticity of effective Hamiltonian $\hat{H}$, see Eqns. \eqref{QSDT mixing1} and \eqref{QSDT mixing6}.








We hope that all these features provided by QSDT can be be useful in shedding light on quantum chaos in future studies through more examples and
theoretical essays.

\end{document}